\def\beq{\begin{equation}}
\def\eeq{\end{equation}}
\def\bea{\begin{eqnarray}}
\def\eea{\end{eqnarray}}
\newtheorem{theorem}{Theorem}
\newtheorem{rem}{Remark}
\newtheorem{prop}{Proposition}
\let\expandafter
\def\subeqnarray{\arraycolsep1pt
    \def\@eqnnum\stepcounter##1{\stepcounter{subequation}
        {\reset@font\rm(\theequation\alph{subequation})}}
\jot5mm     \eqnarray}
\def\tilde{\widetilde}
\def\hat{\widehat}
\def\su2{{\mathfrak {su}}(2)}
\def\e3{{\mathfrak {e}}(3)}
\begin{document}
\title{Linearization through symmetries for discrete equations}

\author{D. Levi  and C. Scimiterna\\
Dipartimento di Matematica e Fisica dell'Universit\'a Roma Tre  and  \\ Sezione INFN di Roma
Tre, \\
via della Vasca Navale 84, Roma, Italy 00146}

\date{\today}
\begin{abstract}
We show that one can define through the symmetry approach a procedure   to check the linearizability of a difference  equation via a point  or a discrete  Cole--Hopf transformation. If the equation is linearizable  the symmetry provides   the linearizing transformation. At the end we present few examples of applications for equations defined on four lattice points. 

\end{abstract}

\maketitle

\section{Introduction}
In recent years the use of partial difference equations has been playing an increasing role in physics and in mathematics. From one side discrete systems are believed to be at the base of the basic laws of physics (see for example the recent literature on quantum gravity \cite{dg}) and on the other side with the increasing use of computers, discretizations are playing  an increasing role  for solving numerically differential equations \cite{dc}. 

 Calogero  \cite{cz} introduced a  heuristic distinction between nonlinear partial differential equations which are  "C-integrable" and "S-integrable", namely, equations that are linearizable by an appropriate change of variables (i.e. by an explicit redefinition of the dependent variable and maybe, in some cases also  the independent variables), and those equations that are integrable via the Inverse Scattering Transform (IST) \cite{cd}.  In the cited reference Calogero used the asymptotic behavior to  find  equations belonging to these two classes of equations as it usually preserves integrability and linearizability. However this approach, even if very fruitful is often very cumbersome and not always exhaustive (see for example the results of  \cite{hlps} for discrete equations). 

A more intrinsic approach  is based on the existence of symmetries. The well-known notion of higher symmetry is at the base of this approach together with the notion of  IST.
 The mutual influence of these theories has led  to the fundamental abstract concept of formal symmetry \cite{ssm,y06},  more basic than symmetry as it provides  also higher conservation law, B\"acklund transformation and Lax pair representation. In this sense a formal
symmetry is a universal object. However the derivation of formal symmetries is difficult to apply in the case of difference equations \cite{ly}. Moreover the  distinction between $C$ and $S$ integrable equations is only at the level of the conservations laws as linearizable equations have no local conservations laws of arbitrary high order \cite{y06}.

A well established result in the framework of Lie theory for proving the linearizability of nonlinear Partial Differential Equations (PDEs) is provided by Kumei and Bluman \cite{kb} (for a recent extended review see \cite{bca}) based on the analysis of the symmetry properties of linear PDEs. Following the analogy of the continuous case  we will formulate a similar theorem for linearizable Partial Difference Equations (P$\Delta$Es) by which we recover  part of the results obtained by assuming the existence of linearizing transformations \cite{ls,ls12,ls122}. Partial results in this direction for the case of difference equations defined on a fixed lattice have been obtained by Quispel and collaborators \cite{bsq95,sbq96,sq97}.

In Section \ref{two} we  review for the sake of completeness and clarity the results presented by Bluman et. al. in \cite{bca} for the linearization of nonlinear PDEs both by point and nonlocal transformations and consider examples of their applications. Section \ref{three} is devoted to the introduction of P$\Delta$Es, their symmetry analysis and the discretization of the theorems presented in \cite{bca}.  A series of examples is presented in Section \ref{four} while in Section \ref{five} we summarize the results and present conclusions and outlooks. In  an Appendix a theorem complements the results of Section \ref{three} by showing that equations possessing an infinite dimensional symmetry algebra depending on an arbitrary function are linear and the arbitrary function satisfies a linear homogeneous equation.

\section{Linearization of PDEs through symmetries }\label{two}

In \cite{kb} Bluman and Kumei introduce a series of theorems dealing with the conditions for a nonlinear PDE to be transformable  into a linear one by {\it contact transformations}. Here, in the following, we will limit ourselves to the case of just {\it point transformations} as these are the relevant ones in the discrete case \cite{ltw, lstw}. In more recent works the same authors \cite{bk} extended the consideration to the case when we have {\it non--invertible transformations} between a nonlinear and a linear PDE. 

The basic observation is that a linear PDE, 
\bea \label{t0}
\mathcal L v(y)=\mathcal F(y),
\eea
 where $\mathcal L$ is a $v$--independent but possibly $y$--dependent linear operator and $\mathcal F(y)$ is the inhomogeneous term, has one point symmetry of  infinitesimal symmetry generator 
 \bea \label{t0b}
 \hat X = w(y)\frac{\partial}{\partial v}
 \eea
  depending on a function $w$ which satisfy the homogeneous equation 
 \bea \label{t1}
\mathcal L  w(y)=0,
\eea
 as any solution of (\ref{t0}) is the sum of a particular solution plus the general solution of the associated  homogeneous equation. 
 As the existence of an infinitesimal generator of the form (\ref{t0b}) is preserved when we  transform a linear equation into a nonlinear one by an invertible point transformation, following \cite{bca} Section 2.4, we can state in the following the  conditions for the existence of an invertible linearization mapping of a nonlinear PDE:

\begin{theorem} \label{tt1}
A nonlinear PDE 
\bea \label{t3a}
\mathcal E_n(x,u,u_x, \cdots u_{nx})=0
\eea
 of order $n$ for a scalar function $u$ of an $r$--dimensional ($r \ge 2$) vector $x$ will be linearizable by a point transformation 
\bea \label{t2} w(y)=f(x,u),\qquad  y=g(x,u) , 
\eea
 to a linear equation (\ref{t1}) for $w$ if it possesses a symmetry generator 
 \bea \label{t3}
 \hat X &=& \sum_{i=1}^r \xi_i(x,u) \partial_{x_i} + \phi(x,u) \partial_u, \quad  \xi_i(x,u)=\alpha_i(x,u) w(y), \\ \nonumber &&\quad \phi(x,u)=\omega(x,u) w(y),
 \eea
  with $\omega$ and  $\alpha_i$ given functions of their arguments and $w(y)$  an arbitrary solution of (\ref{t1}). 
\end{theorem}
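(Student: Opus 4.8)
The plan is to reverse the construction sketched just before the statement: instead of transporting the generator (\ref{t0b}) of a linear equation forward along a point transformation, I would start from a generator $\hat X$ of the form (\ref{t3}) and manufacture the transformation (\ref{t2}) that brings $\hat X$ into the canonical form (\ref{t0b}); the shape of the image equation can then be read off.

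First I would use the special structure of (\ref{t3}): every coefficient $\xi_i$ and $\phi$ carries the common factor $w=w(y)=w(g(x,u))$, so that $\hat X=w\,V$ with $V=\sum_{i=1}^r\alpha_i(x,u)\partial_{x_i}+\omega(x,u)\partial_u$ a nowhere-vanishing vector field on the $(r+1)$-dimensional space of the variables $(x,u)$. I would then straighten $V$ by a flow-box argument: there exist $r$ functionally independent first integrals $g_1(x,u),\dots,g_r(x,u)$ (solutions of $Vg_j=0$) together with a function $f(x,u)$ normalised by $Vf=1$; since $f$ cannot be an invariant it is functionally independent of $g_1,\dots,g_r$, so the map $(x,u)\mapsto(y,v)=(g_1,\dots,g_r,f)$ is a local diffeomorphism — and $g_1,\dots,g_r,f$ are exactly the ingredients of (\ref{t2}). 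In the new coordinates $\hat X g_j=w\,Vg_j=0$ and $\hat X f=w\,Vf=w$, hence $\hat X=w(y)\partial_v$: a generator of the form (\ref{t0b}), depending on the arbitrary function $w$.

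It remains to verify that the nonlinear equation (\ref{t3a}), written in the coordinates $(y,v)$ as $\widetilde{\mathcal E}(y,v,v_y,\dots,v_{ny})=0$, is linear. Here I would use that by hypothesis $\hat X=\hat X_w$ is a symmetry for a whole infinite-dimensional family of functions $w$ (all solutions of a linear equation; since $r\ge2$ this family is as rich as an arbitrary function of $r-1$ variables). The prolongation of $w(y)\partial_v$ has, along the jet coordinate $v_J$, the component $D_J w$, the total derivative of $w$; imposing the determining equation for all $w$ in that family and exploiting their arbitrariness forces every $\partial_{v_J}\widetilde{\mathcal E}$ to be a function of $y$ alone, i.e. $\widetilde{\mathcal E}=\mathcal L v-\mathcal F(y)$ with $\mathcal L$ a linear operator, and consistency of the symmetry with the equation then forces $\mathcal L w=0$ — which is precisely (\ref{t1}). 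This rigidity step is exactly the content of the theorem proved in the Appendix and can be quoted from there.

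The step I expect to be the real obstacle is this last one: one has to pin down for which class of functions $w$ the $\hat X_w$ are symmetries, and linearity of $\widetilde{\mathcal E}$ only follows when that class is large enough — an infinite-dimensional solution space of a linear PDE. This is where the hypothesis $r\ge2$ is indispensable: for $r=1$ a linear ODE has a finite-dimensional solution space, $w(y)\partial_v$ generates only a finite-parameter group, and the rigidity argument collapses. A secondary point to keep track of is that the invariants $g_1,\dots,g_r$ must be selected so that (\ref{t2}) is genuinely invertible and the prolongation transforms without degeneration; this is generic, but it should be recorded as a local non-degeneracy assumption on $V$.
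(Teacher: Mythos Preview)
Your argument is sound, and in fact goes further than the paper does: Theorem~\ref{tt1} is presented there as a review of the Bluman--Kumei results and is not given a self-contained proof. The nearest thing the paper offers is Theorem~\ref{tt2}, which writes down the linearizing map explicitly; your straightening construction reproduces it verbatim --- your invariants $g_j$ of $V$ are the $\Phi_i$ solving~(\ref{t4}), and your $f$ with $Vf=1$ is the $\Psi$ solving~(\ref{t5}). So on the construction side you and the paper agree completely, you have just repackaged the characteristic equations as a flow-box argument.

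The one point to tighten is the rigidity step. You write that it ``is exactly the content of the theorem proved in the Appendix and can be quoted from there'', but the Appendix theorem is stated and proved only for \emph{discrete} equations on four lattice points; it is the P$\Delta$E analogue, not the PDE statement you need. The mechanism is indeed the same --- impose $\mathrm{pr}\,\hat X_w\,\widetilde{\mathcal E}=0$ for the whole infinite family of $w$ and conclude that each $\partial_{v_J}\widetilde{\mathcal E}$ depends on $y$ alone --- but you must carry that out in the jet/PDE setting rather than cite the Appendix verbatim. Your remarks on why $r\ge 2$ is essential and on the local non-degeneracy needed for~(\ref{t2}) to be invertible are to the point and match the implicit assumptions in the paper.
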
  
Following \cite{kb}  we can state the sufficient conditions for the existence of an invertible linearization mapping of a nonlinear PDE. This theorem  defines the transformation (\ref{t2}):

\begin{theorem} \label{tt2}
 If a symmetry generator for the nonlinear PDE (\ref{t3a}) as specified in Theorem \ref{tt1}  exists,  the  invertible transformation (\ref{t2}) which transforms (\ref{t3a}) to the linear PDE (\ref{t1}) is given by
 \bea \label{t6}
 y_i&=&\Phi_i(x,u), \quad i=1,\cdots,r, \\ \label{t7}
 w&=&\Psi(x,u).
 \eea
where $\Phi_i(x,u)$ are  $r$ functionally independent solutions, $i=1,\cdots,r$, of  the linear homogeneous first order  PDE   for a scalar function $\Phi(x,u)$ 
\bea \label{t4}
\sum_{i=1}^r \alpha_i(x,u) \Phi(x,u)_{x_i} + \omega(x,u) \Phi(x,u)_{u}=0, \\ \nonumber
\eea
 and $\Psi(x,u)$ by a particular solution of the
linear inhomogeneous first order  PDE  for a scalar function $\Psi(x,u)$ 
\bea \label{t5}
\sum_{i=1}^r \alpha_i(x,u) \Psi(x,u)_{x_i} + \omega(x,u) \Psi(x,u)_{u}=1.\\ \nonumber
\eea
\end{theorem}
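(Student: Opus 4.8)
The plan is to read the transformation (\ref{t2}) off directly from the requirement that it carry the symmetry generator (\ref{t3}) into the canonical form (\ref{t0b}), which — by the observation preceding Theorem \ref{tt1} — is the shape the point symmetry of a linear equation necessarily has. So I would posit the transformation in the form $y_i=\Phi_i(x,u)$, $w=\Psi(x,u)$ and push the generator (\ref{t3}) forward by the chain rule. Using $\xi_j=\alpha_j w(y)$ and $\phi=\omega w(y)$,
\[
\hat X\,\Phi_i=\sum_{j=1}^r\xi_j\,\partial_{x_j}\Phi_i+\phi\,\partial_u\Phi_i
=w(y)\left(\sum_{j=1}^r\alpha_j\,\partial_{x_j}\Phi_i+\omega\,\partial_u\Phi_i\right),\qquad
\hat X\,\Psi=w(y)\left(\sum_{j=1}^r\alpha_j\,\partial_{x_j}\Psi+\omega\,\partial_u\Psi\right),
\]
so in the new coordinates $\hat X=(\hat X\Phi_i)\,\partial_{y_i}+(\hat X\Psi)\,\partial_w$. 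Demanding that this reduce to $w(y)\,\partial_w$ for every admissible $w$ — which lets one cancel the common, generically nonvanishing, factor $w(y)$ — forces $\sum_j\alpha_j\partial_{x_j}\Phi_i+\omega\,\partial_u\Phi_i=0$ for each $i$ and $\sum_j\alpha_j\partial_{x_j}\Psi+\omega\,\partial_u\Psi=1$, which are precisely the equations (\ref{t4}) and (\ref{t5}).

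Next I would solve these by the classical method of characteristics. The characteristic system $\dot x_j=\alpha_j$, $\dot u=\omega$ in the $(r+1)$-dimensional $(x,u)$-space has $r$ functionally independent first integrals, which furnish $r$ functionally independent solutions $\Phi_1,\dots,\Phi_r$ of the homogeneous equation (\ref{t4}); a particular solution $\Psi$ of the inhomogeneous equation (\ref{t5}) is then obtained by one quadrature along the same characteristics. The residual freedom — replacing the $\Phi_i$ by independent functions of them, and $\Psi$ by $\Psi$ plus an arbitrary function of the $\Phi_i$ — corresponds exactly to relabelling the independent variables of (\ref{t1}) and to altering the particular solution of the inhomogeneous linear equation, so nothing essential is lost. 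Invertibility of (\ref{t6})--(\ref{t7}) then follows because the gradients of $\Phi_1,\dots,\Phi_r$ are linearly independent by functional independence, while $\nabla_{(x,u)}\Psi$ cannot lie in their span: $\hat X$ annihilates every $\Phi_i$ but $\hat X\Psi=w(y)\not\equiv 0$, so $\Psi$ is not a function of the $\Phi_i$ alone, and the Jacobian $\partial(\Phi_1,\dots,\Phi_r,\Psi)/\partial(x_1,\dots,x_r,u)$ is nonsingular. In these variables $\hat X$ has the form (\ref{t0b}) with $w$ an arbitrary solution of $\mathcal{L}w=0$, so by (the reasoning behind) Theorem \ref{tt1} the image of (\ref{t3a}) is the linear equation (\ref{t1}), which identifies (\ref{t6})--(\ref{t7}) as the linearizing map.

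The step I expect to be the real obstacle is not any of these manipulations but the justification that a symmetry of the canonical form (\ref{t0b}), holding simultaneously for the whole (infinite-dimensional) solution space of $\mathcal{L}w=0$, actually pins the transformed equation down to be that very linear equation rather than merely some equation sharing this one-parameter family of symmetries; this is the delicate content already carried by Theorem \ref{tt1} (and is the reason the statement is only a sufficiency criterion, and why the Appendix result is needed in the discrete analogue). An alternative, purely existential route would instead take a linearizing transformation furnished by Theorem \ref{tt1}, pull the symmetry $w(y)\partial_w$ of (\ref{t1}) back to the $(x,u)$ variables, and verify that the pulled-back generator has the form (\ref{t3}) with $\Phi_i$, $\Psi$ automatically satisfying (\ref{t4}) and (\ref{t5}); but the constructive derivation sketched above is the one that actually produces the transformation, which is what the theorem asserts.
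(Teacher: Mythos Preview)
The paper does not actually supply its own proof of Theorem~\ref{tt2}: Section~\ref{two} is explicitly a review of the results of Kumei--Bluman~\cite{kb} and Bluman~et~al.~\cite{bca}, and the theorem is stated with attribution but without argument (the later remark that ``the proof of Theorem~\ref{dt1} and of the following Theorem~\ref{dt2} are the same as those presented by Bluman and Kumei~\cite{bk}'' confirms that the continuous case is taken as known). So there is nothing in the paper to compare against line by line.

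That said, your proposal is correct and is precisely the standard Bluman--Kumei argument the paper is citing: one requires the point transformation to push the generator~(\ref{t3}) forward to the canonical linear-superposition generator~(\ref{t0b}), and the chain-rule computation you wrote out yields exactly (\ref{t4}) for the new independent variables and (\ref{t5}) for the new dependent variable, after cancelling the common factor $w(y)$. Your remarks on the method of characteristics, on the residual gauge freedom in $\Phi_i$ and $\Psi$, and on invertibility of the Jacobian are all accurate and match the treatment in~\cite{bca}, Section~2.4. Your closing caveat --- that the genuinely delicate point is the converse direction, namely that possessing a symmetry of the form~(\ref{t0b}) for \emph{all} solutions of $\mathcal{L}w=0$ forces the transformed equation to be $\mathcal{L}w=0$ itself --- is also well placed; this is exactly why the paper supplies an independent proof of the discrete analogue in the Appendix (Theorem~\ref{at1}).
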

If a given linearizable nonlinear PDE  does not have local 
symmetries  of the form (\ref{t3}), i.e. its local symmetries do not satisfy the criteria of Theorem
\ref{tt1},  it could still happen, as shown in  \cite{bca} Section 4.3, that a nonlocally related system has an
infinite set of local symmetries  that yields an invertible mapping of the nonlocally related system
to some linear system of PDEs. Consequently, the invertible mapping
of the nonlocally related system to a linear system will provide a nonlocal
(non-invertible) mapping of the given nonlinear PDE to a linear PDE. 
This non invertible transformation will be a kind of  Cole--Hopf transformation \cite{cole,hopf}. 
In this case, however, we have to generalize Theorem \ref{tt2} to take into account the fact that we are dealing with a system of equations.

\begin{theorem} \label{tt3}
Let us consider a  system of nonlinear PDE 
\bea \label{t3b}
\mathcal E^{(1)}_n(x,u,v,u_x, v_x,\cdots u_{nx},v_{nx})=0, \quad \mathcal E^{(2)}_n(x,u,v,u_x, v_x,\cdots u_{nx},v_{nx})=0
\eea
 of order $n$ for two scalar functions $u$ and $v$ of an $r$--dimensional ($r \ge 2$) vector $x$ which possesses a symmetry generator 
 \bea \label{t3c}
  \hat X &=& \sum_{i=1}^r \xi^i(x,u,v) \partial_{x_i} + \phi(x,u,v) \partial_u+ \psi(x,u,v) \partial_v, \quad  \xi^i(x,u,v)= \sum_{j=1}^2 \alpha^i_j(x,u,v) w^{(j)}(y), \\ \nonumber && \phi(x,u,v)= \sum_{j=1}^2\beta_j(x,u,v) w^{(j)}(y) ,  \quad \psi(x,u,v)= \sum_{j=1}^2 \gamma_j(x,u,v)w^{(j)}(y), 
 \eea
  with $\alpha_j^i$, $\beta_j$ and $\gamma_j$   given functions of their arguments and the function $w=(w^{(1)}(y), w^{(2)}(y))$ satisfying the linear homogeneous equations 
  \bea \label{t0a}
\mathcal M (y)w(y)=0,
\eea
 with $y$  an $r$--dimensional vector depending on $u$, $v$ and the vector $x$ and $\mathcal M$ is a $2x2$ matrix linear operator. 

The  invertible transformation 
\bea \label{t2a}
w^{(1)}(y)=F^{(1)}(x,u,v),\qquad  w^{(2)}(y)=F^{(2)}(x,u,v),\qquad y=G(x,u,v),
\eea which transforms (\ref{t3b}) to the system of linear PDEs  (\ref{t0a})
 is given by $r$ functionally independent solutions $G_i(x,u,v)$ with $i=1,\cdots,r$ of  the linear homogeneous first order system of PDEs  for a scalar function $G(x,u,v)$ 
\bea \label{t4a}
\sum_{i=1}^r \alpha^i_k(x,u,v) G_{x_i} + \beta_k(x,u,v) G_{u}+\gamma_k(x,u,v) G_{v}=0
\eea
 and by a particular solution of the
linear inhomogeneous first order system of PDEs for the function $F=(F^{(1)}(x,u,v),F^{(2)}(x,u,v))$ 
\bea \label{t5a}
\sum_{i=1}^r \alpha^i_k(x,u,v) F^{(j)}_{x_i} + \beta_k(x,u,v) F^{(j)}_{u}+ \gamma_k(x,u,v) F^{(j)}_{v}=\delta_k^j,
\eea
with $\delta_k^j$ the standard Kronecker symbol.
\end{theorem}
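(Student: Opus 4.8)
The plan is to mimic the proof of Theorem~\ref{tt2} adapted to the two--component setting, treating the pair $w=(w^{(1)},w^{(2)})$ as a vector of arbitrary solutions of the linear system (\ref{t0a}) and exploiting the hypothesis that these arbitrary quantities enter the symmetry coefficients (\ref{t3c}) only linearly and algebraically. First I would observe that, since $w^{(1)}(y)$ and $w^{(2)}(y)$ are arbitrary solutions of (\ref{t0a}), the symmetry generator $\hat X$ in (\ref{t3c}) actually represents an infinite--dimensional family of symmetries parametrized by these functions; evaluating the symmetry condition on this family and using the arbitrariness of $w^{(j)}$ and of their derivatives (along the characteristic directions) forces separate constraints, which is exactly what produces the first--order systems (\ref{t4a})--(\ref{t5a}). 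The key structural input is that $\hat X$ acts on any function $H(x,u,v)$ by $\hat X H=\sum_i\xi^i H_{x_i}+\phi H_u+\psi H_v=\sum_{j}w^{(j)}(y)\bigl(\sum_i\alpha^i_j H_{x_i}+\beta_j H_u+\gamma_j H_v\bigr)$, so the bracketed differential operators $D_j:=\sum_i\alpha^i_j\partial_{x_i}+\beta_j\partial_u+\gamma_j\partial_v$ are precisely the operators appearing in (\ref{t4a})--(\ref{t5a}).

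Next I would construct the transformation explicitly. The equations (\ref{t4a}) say $D_k G=0$ for $k=1,2$; generically this is an overdetermined but (by the assumed existence of the symmetry and a compatibility/Frobenius argument) consistent homogeneous linear first--order system whose solution space, after fixing the rank, is spanned by $r$ functionally independent functions $G_1(x,u,v),\dots,G_r(x,u,v)$, which we take as the new independent variables $y_i=G_i(x,u,v)$. Similarly (\ref{t5a}) is the inhomogeneous system $D_k F^{(j)}=\delta^j_k$; a particular solution $F=(F^{(1)},F^{(2)})$ exists by the same compatibility argument, and we set $w^{(j)}=F^{(j)}(x,u,v)$. One then verifies that under this change of variables the infinitesimal generator $\hat X$ is mapped to $\sum_j w^{(j)}\partial_{w^{(j)}}$ applied through the chain rule in a way that reproduces exactly the canonical form $\hat X=\sum_j w^{(j)}(y)\,\partial/\partial v^{(j)}$ characteristic of a linear system — because by construction $\hat X y_i=D_k G_i\big|_{\text{summed}}=0$ and $\hat X w^{(j)}=\sum_k w^{(k)}(y)\,D_k F^{(j)}=\sum_k w^{(k)}(y)\delta^j_k=w^{(j)}(y)$. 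Since possessing a symmetry of the form $\sum_j w^{(j)}\partial_{w^{(j)}}$ with $w$ ranging over all solutions of a linear system characterizes (by the superposition principle, exactly as in the scalar discussion around (\ref{t0})--(\ref{t1})) a linear system for the new dependent variables, the transformed equations (\ref{t3b}) must be the linear system (\ref{t0a}), so the map (\ref{t2a}) is the claimed linearization; its invertibility follows from the functional independence of $G_1,\dots,G_r,F^{(1)},F^{(2)}$, which must be checked as part of the construction.

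I expect the main obstacle to be the compatibility (integrability) question for the overdetermined first--order systems (\ref{t4a}) and (\ref{t5a}): with two operators $D_1,D_2$ and only $r+2$ coordinates one must argue that the existence of the postulated symmetry generator — equivalently, the closure of $\hat X$ together with the structure of (\ref{t0a}) — guarantees that the $D_k$ have the right involutivity so that $r$ independent solutions $G_i$ and a particular vector solution $F$ actually exist; this is where the hypotheses of the theorem (the linearity in $w^{(j)}$, the fact that $w$ solves a genuinely linear system $\mathcal M(y)w=0$) must be used essentially, rather than merely formally. A secondary technical point is the careful bookkeeping of the chain rule showing that the transformed generator has no $\partial_{y_i}$ component and unit coefficients in the $\partial_{w^{(j)}}$ directions; this is routine once the defining equations $D_kG_i=0$, $D_kF^{(j)}=\delta^j_k$ are in hand, but it is the step that actually pins down that the image equation is precisely (\ref{t0a}) and not merely \emph{some} linear system. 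Finally, one should note for completeness that, just as in Theorems~\ref{tt1}--\ref{tt2}, the resulting linearization of the nonlocally related system (\ref{t3b}) descends to a non--invertible, Cole--Hopf--type map of the original single nonlinear PDE, which is the ultimate purpose of introducing the auxiliary variable $v$.
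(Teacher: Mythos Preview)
Your proposal is correct and follows precisely the Bluman--Kumei strategy that the paper itself relies on: the paper does not give an independent proof of Theorem~\ref{tt3} but presents it as the straightforward two--component generalization of Theorem~\ref{tt2}, whose proof is deferred to \cite{kb,bca}. Your key computation---that in the new coordinates $(y,w)$ defined by $D_kG_i=0$ and $D_kF^{(j)}=\delta^j_k$ the generator $\hat X$ has vanishing $\partial_{y_i}$ components and coefficient $w^{(j)}(y)$ in the $\partial_{w^{(j)}}$ direction, hence reduces to the canonical superposition symmetry of a linear system---is exactly the mechanism underlying the cited references, and your remarks on the Frobenius/compatibility issue and on the descent to a Cole--Hopf map are appropriate caveats that the paper leaves implicit.
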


For the sake of completeness and to clarify the application of the theorems presented above, in view of the discretization which will be presented in the following section, we consider here one example of linearizable nonlinear PDEs belonging to each of the two cases presented above.

\subsection{A nonlinear PDE linearizable by a point transformation}
It is well know, see for example Olver book \cite{olver}, that the potential Burgers equation 
\bea \label{b1}
u_t = u_{xx} + (u_x)^2,
\eea
 is linearizable by a point transformation. In fact the infinite dimensional part of the infinitesimal generator of its point symmetries is given by
 \bea \label{b2}
 \hat X =  v(x,t) e^{-u} \partial_u,
 \eea
 where  $v(x,t)$ satisfies the homogeneous linear heat  equation $v_t-v_{xx}=0$.
 
 The conditions of Theorem \ref{tt1} are satisfied with  $\omega=e^{-u}$ and $\alpha_i=0$.
 We can apply Theorem \ref{tt2} and we  get $\Phi_1=x$ and $\Phi_2=t$  as from (\ref{t4}) $\Phi_u=0$ while from   (\ref{t5}) $\Psi(x,u)$ satisfies the equation $  \Psi(x,u)_{u}=e^{u}$ i.e. 
 \bea \label{b3}
 u=\log_e(w).
 \eea
 Eq. (\ref{b3}) is the linearizing transformation for the potential Burgers equation (\ref{b1}).
\subsection{A nonlinear PDE linearizable by a non invertible transformation}
The standard example in this class is the Burgers equation
\bea \label{b4}
u_t = u_{xx} - u u_x=[u_x - \frac{1}{2} u^2]_x,
\eea 
linearizable by a Cole--Hopf transformation. As (\ref{b4}) has no infinite dimensional symmetry algebra but it is written as a conservation law we can introduce a potential function $v(x,t)$ and (\ref{b4}) can be written as the system 
\bea \label{b5}
v_x&=&2u, \\ \nonumber
v_t&=& 2 u_x - u^2.
\eea
Applying Theorem \ref{tt3}   we can find an infinite dimensional  symmetry of the form of (\ref{t3c}). In fact, solving the determining equations, apart from  terms corresponding to a finite dimensional algebra, we obtain an infinite dimensional dilation symmetry given by
\bea \label{b7}
\psi=4 \omega(x,t) e^{\frac{v}{4}}, \qquad \phi=\frac{1}{2} \psi_x + u \psi_v=e^{\frac{v}{4}} [ 2 \omega_x + \omega u]
\eea
where $\omega(x,t)$ satisfies the linear heat equation $\omega_{t} - \omega_{xx}=0$. 

The linearizing transformation can be obtained from Theorem \ref{tt3}. Let us define $w^{(1)}(y)=\omega(x,t)$, $w^{(2)}(y)=\omega_x(x,t)$ and take as functionally independent solutions of (\ref{t4a}) $G_1=x$ and $G_2=t$. 
 As $\alpha_k^i=0$ and $\gamma_1=4 e^{\frac{v}{4}}$, $\gamma_2=0$, $\beta_1=u e^{\frac{v}{4}}$ and $\beta_2=2 e^{\frac{v}{4}} $, we get as a particular solution of (\ref{t5a})
\bea \label{b8}
F^{(1)}=-e^{-\frac{v}{4}}, \quad F^{(2)}=\frac{1}{2} u e^{-\frac{v}{4}}.
\eea
Eq. (\ref{t2a}) implies $ \omega=-e^{-\frac{v}{4}}$ and $\omega_x=\frac{1}{2} u e^{-\frac{v}{4}}$ and from it we obtain as a linearizing transformation
the Cole--Hopf transformation 
\bea \nonumber
u=-2 \frac{\omega_x}{\omega}.
\eea

\section{P$\Delta$Es and their linearization}\label{three}
We will consider here the problem of the linearization of a partial difference equation. 

As is well known difference equations can be obtained  on a given grid, starting from physical or chemical or biological problems dealing with lattice systems,  or as discretization of differential equations with symmetries which we want to preserve. In this second case we can use the freedom of the grid to  get a difference scheme which preserve  part of the symmetries of the continuous equation.

In this first approach to the problem of linearization of difference equation we will consider  the case when the grid is preassigned and assumed to be fixed, with constant  lattice spacing. Moreover for simplicity we will consider autonomous equations  defined on a two dimensional grid  so that there is no privileged position and we can write the dependent variables just in terms of the shifts with respect to the reference point  $u_{n,m}=u_{0,0}$ on the lattice.

A partial difference equation (P$\Delta$E) of order $N \cdot N'$ for a function $u_{n,m}$ will be a relation between $N \cdot N'$ points in the two dimensional grid, i.e.
\bea \label{e1}
\mathcal E_{N \cdot N'} \Big (u_{0,0}, u_{1,0},  \cdots, u_{N,0}, u_{0,1}, \cdots,u_{N,1}, \cdots, u_{N,N'} \Big )=0.
\eea

A continuous symmetry for  equations of the form (\ref{e1}), where the lattice is fixed, i.e. the two independent variables $x_{n,m}$ and $t_{n,m}$ are completely specified as $x_{n,m}=h_x n+x_0$ and $t_{n,m}=h_t m + t_0$  with $h_x$, $h_t$, $x_0$ and $t_0$ given constants, is given just by dilations 
\bea \label{e2}
\hat X_{n,m} = \chi_{n,m}(u_{n,m})\partial_{u_{n,m}}.
\eea
It is easy to show  that  a linear P$\Delta$E of order$N\cdot N'$ for a function $v_{n,m}$
\bea \label{e1a}
\mathcal F_{N \cdot N'}=b(n,m)+\sum_{(i,j)=(0,0)}^{(N,N')} a^{i,j}(n,m) v_{n+i,m+j}=0,
\eea
has always the symmetry
 \bea \label{e4}
\hat X_{n,m} = \phi_{n,m} \partial_{v_{n,m}},
\eea
where  $\phi_{n,m} $ is a solution of the homogeneous part of (\ref{e1a}). It is not at all obvious, however,  that an equation (\ref{e1}) having a symmetry (\ref{e4}) is  linear and that the function $\phi_{n,m}$ must satisfy a homogeneous  linear equation. We leave to the Appendix the proof of this proposition.
The symmetry  (\ref{e4}) is due to the superposition principle for linear equations. If the nonlinear equation (\ref{e1}) is linearizable by a point transformation then the symmetry (\ref{e4}) must be preserved. This is the content of the Theorem \ref{tt1} we presented in the previous Section in the case of PDEs and this must still be valid here. So we can state the following theorem:

\begin{prop} \label{dt1}
An autonomous linear P$\Delta$E (\ref{e1}) is linearizable if it has a point symmetry of the form 
\bea \label{e4a}
\hat X_{n,m} = \alpha_{n,m}(u_{n,m})  \phi_{n,m} \partial_{v_{n,m}},
\eea
 where the function $\phi_{n,m}$ satisfies a linear P$\Delta$E of the form (\ref{e1a}) with $b(n,m)=0$.
\end{prop}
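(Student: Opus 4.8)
The plan is to follow the strategy of Theorems~\ref{tt1} and~\ref{tt2} as closely as possible, the simplification being that on a fixed lattice there is no transformation of the independent variables to reconstruct, so the only object to build is the discrete analogue of the map $w=\Psi(x,u)$ of~(\ref{t7}). Since the generator~(\ref{e4a}) has no ``$\xi$''--part, the system~(\ref{t5}) degenerates to a single first order ordinary differential equation in the variable $u_{n,m}$, with $(n,m)$ entering only as parameters,
\beq \label{pl1}
\alpha_{n,m}(u_{n,m})\,\partial_{u_{n,m}}\Psi(n,m,u_{n,m})=1 ,
\eeq
solved by the primitive $\Psi(n,m,u_{n,m})=\int^{u_{n,m}}\!ds/\alpha_{n,m}(s)$. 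I would then fix the new field by $w_{n,m}=\Psi(n,m,u_{n,m})$. Because a nontrivial symmetry forces $\alpha_{n,m}\neq 0$, the function $u_{n,m}\mapsto\Psi(n,m,u_{n,m})$ is strictly monotone, hence invertible, with inverse $u_{n,m}=\Phi(n,m,w_{n,m})$; in the genuinely autonomous situation one checks a posteriori that $\alpha_{n,m}$, and therefore $\Psi$, carries no explicit $(n,m)$ dependence, so that the transformation is itself a point transformation.

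The next step is to check that this map straightens the generator. As $w_{n+i,m+j}=\Psi(n+i,m+j,u_{n+i,m+j})$ depends on the single variable $u_{n+i,m+j}$, the prolongation of~(\ref{e4a}) acts through $\alpha_{n+i,m+j}(u_{n+i,m+j})\,\phi_{n+i,m+j}\,\partial_{u_{n+i,m+j}}\Psi=\phi_{n+i,m+j}$, using~(\ref{pl1}) at each lattice point, so in the new coordinates the symmetry reads
\beq \label{pl2}
\hat X_{n,m}=\phi_{n,m}\,\partial_{w_{n,m}} ,
\eeq
which is exactly the form~(\ref{e4}) of the symmetry carried by a linear P$\Delta$E. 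Denoting by $\tilde{\mathcal E}_{N\cdot N'}(w_{0,0},w_{1,0},\dots)=0$ the image of~(\ref{e1}) under $u_{n+i,m+j}\mapsto\Phi(n+i,m+j,w_{n+i,m+j})$, the transformed equation $\tilde{\mathcal E}_{N\cdot N'}=0$ inherits the symmetry~(\ref{pl2}) for every $\phi_{n,m}$ ranging over the solution set of the homogeneous linear P$\Delta$E assumed in the statement; that solution set is parametrized by free data, so this is an infinite dimensional symmetry algebra depending on an arbitrary function.

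It then remains to conclude that $\tilde{\mathcal E}_{N\cdot N'}=0$ is linear, which is precisely the converse statement established in the Appendix: an autonomous P$\Delta$E admitting an infinite dimensional symmetry algebra of the form~(\ref{e4}) depending on an arbitrary function is necessarily of the linear form~(\ref{e1a}), and the arbitrary function must satisfy its homogeneous part --- which here is nothing but the equation for $\phi_{n,m}$ we started from, with $b(n,m)=0$. Hence~(\ref{e1}) is carried by the point transformation $w_{n,m}=\Psi(n,m,u_{n,m})$ into a linear P$\Delta$E, i.e. it is linearizable. I expect this last step to be the real substance of the proof, since it leans entirely on the Appendix, whose crux is that ``arbitrary solution of a homogeneous linear equation'' supplies genuine infinite--dimensional freedom (free data, not a finite dimensional family) and that this freedom is what forces linearity through the determining equations. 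The remaining points are comparatively routine: existence and single--valuedness of the primitive in~(\ref{pl1}) together with the invertibility of $\Psi$, and, in the autonomous reduction, the verification that $\alpha_{n,m}$ has no explicit $(n,m)$ dependence.
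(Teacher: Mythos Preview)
Your proposal is correct and matches the paper's intended argument: the paper does not spell out a proof of Proposition~\ref{dt1} but simply remarks that it and Theorem~\ref{dt2} are proved exactly as in Bluman and Kumei's continuous case, and that is precisely the route you reconstruct---solving the ODE~(\ref{e6}) to straighten the generator to the form~(\ref{e4}) and then invoking the Appendix (Theorem~\ref{at1}) to force linearity of the transformed equation. The only technical caveat, shared with the paper itself, is that Theorem~\ref{at1} is stated and proved only for four--point equations, so strictly speaking the last step is complete only in that setting.
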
 

The proof of Theorem \ref{dt1} and of the following Theorem \ref{dt2} are  the same as those presented by Bluman and Kumei \cite{bk} in the continuous case  and so we will not repeat them here.

As in the case of PDEs we can present the following  theorem which provide the transformation which reduces the equation to a linear one. In this case, as the independent variables are not changed,  the transformation is given  just  by a dilation. So we have:

\begin{theorem} \label{dt2}
The point transformation which linearizes the nonlinear P$\Delta$E (\ref{e1})
\bea \label{e5}
v_{n,m} = \Psi_{n,m} (u_{n,m})
\eea
 is obtained by solving the differential equation 
 \bea \label{e6}
 \alpha_{n,m}(u_{n,m}) \frac{d  \Psi_{n,m} (u_{n,m})}{d u_{n,m}} = 1.
\eea
\end{theorem}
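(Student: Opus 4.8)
The plan is to show that the point transformation (\ref{e5}) determined by the ordinary differential equation (\ref{e6}) is precisely the one that carries the infinitesimal symmetry (\ref{e4a}) of the nonlinear P$\Delta$E into the canonical superposition symmetry (\ref{e4}) of a linear P$\Delta$E, and then to invoke the proposition proved in the Appendix to conclude that the equation for $v_{n,m}$ so obtained is genuinely linear; this parallels the proof of Theorem \ref{tt2} in the continuous setting. As a preliminary remark I would stress that, since the lattice is kept fixed, every admissible point transformation is a node-wise dilation $v_{n,m}=\Psi_{n,m}(u_{n,m})$, invertible exactly when $d\Psi_{n,m}/du_{n,m}\neq 0$; so the only object to be pinned down is the scalar function $\Psi_{n,m}$ of the single variable $u_{n,m}$ at each node, with $\alpha_{n,m}$ playing here the role of the coefficient $\omega$ of $\partial_u$ in Theorem \ref{tt2}, a fixed lattice admitting no shift in the independent variables.

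First I would compute how the symmetry transforms. Because $v_{n,m}$ depends on $u_{k,l}$ only through $u_{n,m}$, the Jacobian of the change of variables is diagonal, $\partial v_{k,l}/\partial u_{n,m}=\delta_{n,k}\,\delta_{m,l}\,\Psi_{n,m}'(u_{n,m})$, while the lattice function $\phi_{n,m}$ carries no dependence on the $u$'s; hence the symmetry generator of (\ref{e1}), which in the original variable reads $\hat X_{n,m}=\alpha_{n,m}(u_{n,m})\,\phi_{n,m}\,\partial_{u_{n,m}}$ (cf. (\ref{e4a})), becomes in the new variable $\hat X_{n,m}=\alpha_{n,m}(u_{n,m})\,\Psi_{n,m}'(u_{n,m})\,\phi_{n,m}\,\partial_{v_{n,m}}$. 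Requiring this to coincide with the linear symmetry (\ref{e4}), i.e. that the coefficient reduce to $\phi_{n,m}$, forces $\alpha_{n,m}(u_{n,m})\,\Psi_{n,m}'(u_{n,m})=1$, which is (\ref{e6}) --- the analogue of the inhomogeneous equation (\ref{t5}) on a fixed lattice. Conversely, wherever $\alpha_{n,m}$ does not vanish this first order equation is integrated node by node, $\Psi_{n,m}(u_{n,m})=\int^{u_{n,m}}\!\frac{du'}{\alpha_{n,m}(u')}$, up to an $(n,m)$-dependent additive constant --- the residual freedom of adding to $v$ a particular solution of the inhomogeneous linear equation --- and the resulting $\Psi_{n,m}$ satisfies $\Psi_{n,m}'=1/\alpha_{n,m}\neq 0$, so (\ref{e5}) is a bona fide invertible dilation mapping the $N\cdot N'$-point stencil of (\ref{e1}) to itself.

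It remains to identify the transformed equation. With this $\Psi$, as $\phi_{n,m}$ ranges over all solutions of the homogeneous linear P$\Delta$E of Proposition \ref{dt1}, the transformed equation $\widetilde{\mathcal E}_{N\cdot N'}(v)=0$ acquires the whole infinite family of symmetries $\phi_{n,m}\partial_{v_{n,m}}$; by the proposition established in the Appendix, an equation on the stencil (\ref{e1}) possessing such a family is necessarily linear, with $\phi_{n,m}$ satisfying a homogeneous linear P$\Delta$E of the form (\ref{e1a}) with $b(n,m)=0$. Hence (\ref{e5})--(\ref{e6}) does linearize (\ref{e1}); running the same computation backwards also shows that any other linearizing dilation differs from this one only by a node-wise affine redefinition of $v_{n,m}$, so that, up to this harmless freedom, it is \emph{the} linearizing transformation. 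The step I expect to carry the real content is this last one --- the implication ``the transformed equation possesses the canonical superposition symmetry $\Rightarrow$ it is linear'' --- which is not a formal manipulation but exactly the statement proved in the Appendix, and is therefore only quoted here; the remaining points needing care are merely the regularity hypothesis $\alpha_{n,m}\neq 0$ on the relevant range of $u_{n,m}$ and the immediate fact that a pointwise dilation does not enlarge the stencil.
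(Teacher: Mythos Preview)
Your argument is correct and is precisely the discrete analogue of the Bluman--Kumei construction that the paper invokes: the paper does not spell out a proof of Theorem~\ref{dt2} but explicitly states that it ``is the same as those presented by Bluman and Kumei \cite{bk} in the continuous case'', i.e.\ the determination of $\Psi_{n,m}$ from the inhomogeneous characteristic equation (\ref{t5}) specialized to a fixed lattice, which is exactly your computation leading to (\ref{e6}). Your additional appeal to Theorem~\ref{at1} in the Appendix to close the loop (symmetry of the form (\ref{e4}) $\Rightarrow$ linearity of the transformed equation) is a welcome refinement that the paper leaves implicit.
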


As in the continuous case, if (\ref{e1}) has no symmetries of the form considered in Theorem \ref{dt1} we can introduce some potential variables. On the lattice there are infinitely many ways to introduce a potential variable as there are infinitely many  ways to define a first derivative. Thus it seems to be advisable to check the equation with a linearizability criterion like the algebraic entropy \cite{v} before looking for potential variables.

The simplest way to introduce a potential symmetry is by writing the difference equation (\ref{e1}) as a system 
\bea \label{e7}
v_{n+1,m}= \mathcal E^{(1)}_{n,m}(u_{n,m}, \cdots ), \qquad v_{n,m+1}=\mathcal E^{(2)}_{n,m}(u_{n,m}, \cdots).
\eea
In such a way  
\bea \label{e8}
\mathcal E_{N \cdot N'} \Big (u_{0,0}, u_{1,0},  \cdots, u_{N,0}, u_{0,1}, \cdots,u_{N,1}, \cdots, u_{N,N'})= \mathcal E^{(1)}_{n,m+1}-\mathcal E^{(2)}_{n+1,m}.
\eea
However it is easy to show in full generality that the symmetries for (\ref{e7}) and for (\ref{e1}) are the same.

We  can introduce potential symmetries by the following system,  
\bea \label{e7a}
v_{n+1,m} - v_{n,m}= \mathcal E^{(1)}_{n,m}(u_{n,m}, \cdots ), \qquad v_{n,m+1}- v_{n,m}=\mathcal E^{(2)}_{n,m}(u_{n,m}, \cdots).
\eea
In such a way 
\bea \label{e8a}
\mathcal E_{N \cdot N'} \Big (u_{0,0}, u_{1,0},  \cdots, u_{N,0}, u_{0,1}, \cdots,u_{N,1}, \cdots, u_{N,N'})=[ \mathcal E^{(1)}_{n,m+1}-\mathcal E^{(1)}_{n,m}] - [\mathcal E^{(2)}_{n+1,m}-\mathcal E^{(2)}_{n,m}].
\eea
i.e. the nonlinear difference equation  is written as a discrete conservation law.
As (\ref{e7a}) is a system, to construct the symmetries  we have to generalize  the linearization theorem as we did in the continuous case.

\begin{theorem} \label{dt3}
Let us consider a  system of nonlinear P$\Delta$Es 
\bea \label{e9}
\mathcal F^{(1)}_{n,m}(u_{n,m}, \cdots, v_{n,m}, \cdots)=0, \quad \mathcal F^{(2)}_{n,m}(u_{n,m}, \cdots, v_{n,m}, \cdots)=0
\eea
 of order $N\cdot N'$ for two scalar functions $u_{n,m}$ and $v_{n,m}$ of two indices $n$ and $m$ which possesses a symmetry generator 
 \bea \label{e10}
  \hat X &=&  \phi_{n,m}(u_{n,m},v_{n,m}) \partial_{u_{n,m}}+ \psi_{n,m}(u_{n,m},v_{n,m}) \partial_{v_{n,m}}, \\ \nonumber && \phi_{n,m}(u_{n,m},v_{n,m})= \sum_{j=1}^2 \beta^{(j)}_{n,m}(u_{n,m},v_{n,m}) w^{(j)}_{n,m} ,  \quad \psi_{n,m}(u_{n,m},v_{n,m})=\sum_{j=1}^2 \gamma^{(j)}_{n,m}(u_{n,m},v_{n,m})w^{(j)}_{n,m}, 
 \eea
  with  $\beta^{(j)}$ and $\gamma^{(j)}$   given functions of their arguments and the function $w=(w^{(1)}_{n,m}, w^{(2)}_{n,m})$ satisfying the linear equations 
  \bea \label{e11}
\mathcal L _{n,m}w_{n,m}=0.
\eea

The  invertible transformation 
\bea \label{e12}
w^{(1)}_{n,m}=K^{(1)}_{n,m}(u_{n,m},v_{n,m}),\qquad  w^{(2)}_{n,m}=K^{(2)}_{n,m}(u_{n,m},v_{n,m}),
\eea which transforms (\ref{e9}) to the system of linear PDEs  (\ref{e11})
 is given by  a particular solution of the
linear inhomogeneous first order system of PDEs for the function $K=(K^{(1)}_{n,m}(u_{n,m},v_{n,m}),K^{(2)}_{n,m}(u_{n,m},v_{n,m}))$ 
\bea \label{e13}
 \beta^{(k)}_{n,m}(u_{n,m},v_{n,m}) K^{(j)}_{n,m}+ \gamma^{(k)}_{n,m}(u_{n,m},v_{n,m}) K^{(j)}_{n,m}=\delta_k^j,
\eea
where $\delta_k^j$ is the standard Kronecker symbol.
\end{theorem}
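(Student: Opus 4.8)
The plan is to transcribe into the fixed–lattice discrete setting the argument of Bluman and Kumei \cite{bk} that underlies the continuous Theorem \ref{tt3}; the only structural simplification is that here the lattice variables $n,m$ are not transformed — the generator (\ref{e10}) has no $\partial_n,\partial_m$ part — so there is no discrete analogue of the equations (\ref{t4a}) for new independent variables and the construction collapses to the single inhomogeneous system (\ref{e13}) for $K=(K^{(1)}_{n,m},K^{(2)}_{n,m})$.

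First I would introduce the change of dependent variables $w^{(i)}_{n,m}=K^{(i)}_{n,m}(u_{n,m},v_{n,m})$, $i=1,2$, assumed locally invertible, under which $\partial_{u_{n,m}}=\sum_i(\partial_{u_{n,m}}K^{(i)}_{n,m})\partial_{w^{(i)}_{n,m}}$ and similarly for $\partial_{v_{n,m}}$, and rewrite (\ref{e10}) in the new variables:
\[
\hat X=\sum_{i=1}^2\Big(\sum_{j=1}^2 w^{(j)}_{n,m}\big[\beta^{(j)}_{n,m}\,\partial_{u_{n,m}}K^{(i)}_{n,m}+\gamma^{(j)}_{n,m}\,\partial_{v_{n,m}}K^{(i)}_{n,m}\big]\Big)\partial_{w^{(i)}_{n,m}}.
\]
A linear system $\mathcal L_{n,m}w_{n,m}=0$ carries, by the superposition principle, the symmetry $\hat X=\sum_i w^{(i)}_{n,m}\partial_{w^{(i)}_{n,m}}$ with $w$ an arbitrary solution — the system form of (\ref{e4}). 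Demanding that the generator above coincide with this one, and matching the coefficient of each independent component $w^{(j)}_{n,m}$ (legitimate since those coefficients are $w$–independent while $w$ ranges over a full family of solutions), one recovers exactly (\ref{e13}), i.e. $\beta^{(k)}_{n,m}\partial_{u_{n,m}}K^{(j)}_{n,m}+\gamma^{(k)}_{n,m}\partial_{v_{n,m}}K^{(j)}_{n,m}=\delta^j_k$: for each target component $K^{(j)}_{n,m}$, two first–order linear equations labelled by $k$.

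The hard part is the solvability of (\ref{e13}). For fixed $(n,m)$ and fixed $j$ the two equations $k=1,2$ are linear in the pair $(\partial_{u_{n,m}}K^{(j)}_{n,m},\partial_{v_{n,m}}K^{(j)}_{n,m})$, and whenever the $2\times2$ matrix with rows $(\beta^{(k)}_{n,m},\gamma^{(k)}_{n,m})$ is nonsingular — the degenerate case corresponding to a reducible symmetry, to be handled separately — they determine both first derivatives of $K^{(j)}_{n,m}$ explicitly as functions of $(u_{n,m},v_{n,m})$; such a $K^{(j)}_{n,m}$ then exists precisely when these expressions are curl–free, $\partial_{v_{n,m}}(\partial_{u_{n,m}}K^{(j)}_{n,m})=\partial_{u_{n,m}}(\partial_{v_{n,m}}K^{(j)}_{n,m})$. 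I expect this to be the only genuine obstacle: exactly as in \cite{bk}, one has to show that this compatibility identity is not an additional hypothesis but a consequence of the symmetry determining equations for $\beta^{(j)}_{n,m},\gamma^{(j)}_{n,m}$ — that is, of the assumption that $\hat X$ in (\ref{e10}) is a symmetry of (\ref{e9}) for \emph{every} solution $w$ of (\ref{e11}) — and unwinding those determining equations to extract the identity is where the real computation lies.

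Finally, once $K$ has been constructed and verified invertible, the rewriting above shows that the transformed system admits the infinite–dimensional superposition symmetry $\sum_i w^{(i)}_{n,m}\partial_{w^{(i)}_{n,m}}$ parametrised by the solutions of a linear homogeneous system; by the superposition argument, made precise by the (system version of the) result proved in the Appendix, the transformed system is then necessarily linear, and — after absorbing any inhomogeneous term by a shift of $w$ — is precisely the system (\ref{e11}). Invertibility of $K$ thereby promotes (\ref{e12}) to a genuine linearizing transformation of (\ref{e9}), which is the assertion of the theorem.
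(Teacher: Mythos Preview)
Your proposal is correct and follows exactly the route the paper itself takes: the paper gives no independent proof of Theorem~\ref{dt3} but states (just before Theorem~\ref{dt2}) that the proofs of the discrete linearization theorems ``are the same as those presented by Bluman and Kumei \cite{bk} in the continuous case'', and Theorem~\ref{dt3} is simply the system version placed in that same context. Your write-up is in fact more explicit than the paper --- you correctly isolate the simplification that, on a fixed lattice, the generator has no $\partial_n,\partial_m$ component so the analogue of (\ref{t4a}) disappears, and you correctly read (\ref{e13}) with the partial derivatives $\partial_{u_{n,m}}K^{(j)}_{n,m}$, $\partial_{v_{n,m}}K^{(j)}_{n,m}$ that the continuous template (\ref{t5a}) dictates (the printed (\ref{e13}) is evidently missing them).
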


\section{Examples.} \label{four}
Here we present a few examples of linearizable P$\Delta$Es. For concreteness and for comparing with the previous literature \cite{ls} we limit ourselves to the case when the nonlinear difference equation involves at most 4 lattice points. 

\subsection{Classification of  P$\Delta$Es on a square lattice linearizable by point transformations}

We consider a general autonomous equation defined on a square lattice:
\bea \label{f1}
\mathcal F(u_{0,0}, u_{0,1}, u_{1,0}, u_{1,1})=0,
\eea
where, for convenience,  we just write down the shift with respect to the reference point of indices $n,m$. If we assume that $u_{1,1}$ is present in (\ref{f1}) then we can rewrite the equation  as
\bea \label{f2}
u_{1,1}=F(u_{0,0}, u_{0,1}, u_{1,0}).
\eea
Following Theorem \ref{dt1} we look for an infinitesimal symmetry generator of the form
\bea \label{f3}
\hat X_{0,0} =  \alpha_{0,0}(u_{0,0}) \phi_{0,0} \partial_{u_{0,0}}, 
\eea
where the function $\phi$ solves a linear homogeneous equation, i.e.
\bea \label{f4}
\hat {\mathcal L} \phi_{0,0} = 0, \qquad  \hat {\mathcal L} = a + b T_1 + c T_2 + d T_1 T_2,
\eea
with $T_1$ and $T_2$ operators such that $T_1 \phi_{0,0} = \phi_{1,0}$ and $T_2 \phi_{0,0} = \phi_{0,1}$. If $d \ne 0$ then we can write (\ref{f4}) as
\bea \label{f5}
\phi_{1,1}= -\frac{1}{d} [a \phi_{0,0} + b \phi_{1,0} + c \phi_{0,1}].
\eea
In this setting  $u_{0,i}$, $u_{j,0}$, $\phi_{0,i}$ and $\phi_{j,0}$, with $i,j=0,1$ are independent variables. 
If (\ref{f3}) is a  generator of the symmetries of (\ref{f2}) then we must have 
\bea \label{f6}
\hat X \mathcal F &=& 0 \, \leftrightarrow \, F_{,u_{0,0}} \phi_{0,0} \alpha_{0,0}(u_{0,0}) + F_{,u_{1,0}} \phi_{1,0} \alpha_{1,0}(u_{1,0}) + F_{,u_{0,1}} \phi_{0,1} \alpha_{0,1}(u_{0,1})= \\ \nonumber &&= \phi_{1,1} \alpha_{1,1}(u_{1,1}) =  -\frac{1}{d} [ a \phi_{0,0} + b \phi_{1,0} + c \phi_{0,1}] \alpha_{1,1}(F(u_{0,0}, u_{0,1}, u_{1,0})).
  \eea
As $\phi_{0,0}$, $\phi_{1,0}$ and $\phi_{0,1}$ are independent variables, we obtain from (\ref{f6}) three equations relating the  function $\alpha$, intrinsic of the symmetry, with the function $F$, intrinsic of the nonlinear equation:
\bea \label{f7}
a \alpha_{1,1}(F) + d F_{,u_{0,0}} \alpha_{0,0}(u_{0,0}) =0, \quad b \alpha_{1,1}(F) + d F_{,u_{1,0}} \alpha_{1,0}(u_{1,0}) =0, \quad c \alpha_{1,1}(F) + d F_{,u_{0,1}} \alpha_{0,1}(u_{0,1}) =0.
\eea
As in (\ref{f7}), up to a constant,  the first term is the same for all three equations, we can rewrite them as a  system of PDE's for the function $F$ depending on $\alpha$
\bea \label{f8}
\frac{1}{a} F_{,u_{0,0}} \alpha_{0,0}(u_{0,0} )= \frac{1}{b} F_{,u_{1,0}} \alpha_{1,0}(u_{1,0}) = \frac{1}{c} F_{,u_{0,1}} \alpha_{0,1}(u_{0,1}),
\eea
which can be solved on the characteristic, giving $F$ as a function of the symmetry variable 
\bea \label{f9}
\xi = a g(u_{0,0}) + b g(u_{1,0}) + c g(u_{0,1}), \qquad \alpha(x)=\frac{1}{g_{,x}(x)}.
\eea
Introducing this result in Theorem \ref{dt2} we get $\psi(u_{0,0})=\int^{u_{0,0}} g_x(x) dx =g(u_{0,0})+\kappa$, with $\kappa$ an arbitrary integration constant.
Then (\ref{f7}) gives that any linearizable nonlinear P$\Delta$E on a four points lattice must be written as 
\bea \label{f10}
d F_{,\xi}  + \alpha(F(\xi))=0 \, \rightarrow \, F= g^{-1}(\frac{\xi - \xi_0}{d}),
\eea
where by $g^{-1}(x)$ we mean the inverse of the function $g(x)$ given in (\ref{f9}).

Let us notice that from \eqref{f8} we can get the six linearizability necessary conditions we introduced in \cite{ls12} to classify linearizable, multilinear equations on the four lattice points, that is
\begin{subequations}\label{Osiris}
\bea
A\left(x,u_{0,1}\right)\doteq\frac{F_{,u_{0,0}}}{F_{,u_{1,0}}}\vert_{u_{0,0}=u_{1,0}=x}=\frac{a}{b},\ \ \ \forall x,\ u_{0,1},\label{Osiris1}\\
B\left(x,u_{1,0}\right)\doteq\frac{F_{,u_{0,0}}}{F_{,u_{0,1}}}\vert_{u_{0,0}=u_{0,1}=x}=\frac{a}{c},\ \ \ \forall x,\ u_{1,0},\label{Osiris2}\\
C\left(x,u_{0,0}\right)\doteq\frac{F_{,u_{0,1}}}{F_{,u_{1,0}}}\vert_{u_{1,0}=u_{0,1}=x}=\frac{c}{b},\ \ \ \forall x,\ u_{0,0},\label{Osiris3}\\
\frac{\partial}{\partial u_{0,1}}\frac{F_{,u_{0,0}}}{F_{,u_{1,0}}}=0,\ \ \ \forall u_{0,0},\ u_{1,0},\ u_{0,1},\label{Osiris4}\ \ \ \ \ \ \ \ \\
\frac{\partial}{\partial u_{1,0}}\frac{F_{,u_{0,0}}}{F_{,u_{0,1}}}=0,\ \ \ \forall u_{0,0},\ u_{1,0},\ u_{0,1},\label{Osiris5}\ \ \ \ \ \ \ \ \\
\frac{\partial}{\partial u_{0,0}}\frac{F_{,u_{0,1}}}{F_{,u_{1,0}}}=0,\ \ \ \forall u_{0,0},\ u_{1,0},\ u_{0,1}.\label{Osiris6}\ \ \ \ \ \ \ \ 
\eea
\end{subequations}
So linearizable equations on four lattice points are characterized by a function $g(x)$ and its inverse.  As a trivial example we can choose $g(x)=e^x$ and we get that the nonlinear equation $u_{1,1}=\log(\alpha e^{u_{0,0}} + \beta e^{u_{1,0}} + \gamma e^{u_{0,1}} + k)$ linearizes to $\psi_{1,1}=\alpha \psi_{0,0}+\beta \psi_{1,0} + \gamma \psi_{0,1}$. The corresponding function $\alpha$ is $\alpha(x)=e^{-x}$ and the linearizing transformation is $\psi_{0,0}=e^{u_{0,0}} + \kappa$.

In \cite{ls} we have shown that there is a multilinear equation on the square lattice belonging to the $Q^+$ class which is linearizable. It is interesting to find  the corresponding function $g(x)$ in term of which we can linearize it. The function $F$ in this case is a fraction of a second order polynomial over a third order polynomial. The only function $g$ which provides this structure is $g(x)=\frac{1}{\ell_1 x+\ell_0}$ which gives $F=-\frac{1}{\ell_1}\Big [ \frac{d}{\xi -  \xi_0} + \ell_0 \Big ]$ where $\xi=\frac{a}{\ell_1 u_{0,0} + \ell_0} +\frac{b}{\ell_1 u_{1,0} + \ell_0} + \frac{c}{\ell_1 u_{0,1} + \ell_0}$. In this case the linearizing transformation is the linear fractional function $\psi_{0,0}(u_{0,0})=\frac{1}{\ell_1 u_{0,0} +\ell_0} + \kappa$, where $\kappa$ is an arbitrary constant.

\subsection{Linearizable potential equations}
 For the sake of simplicity we set in (\ref{e7a}) $\mathcal E^{(2)}_{n,m}=u_{n,m}$. If we want the equation (\ref{e1}) to be on the square we have to choose  $\mathcal E^{(1)}_{n,m}=g_{n,m}(u_{n,m}, u_{n+1,m})$.  The application of the prolongation of the infinitesimal generator (\ref{e10}) to the second equation in (\ref{e7a}) gives \cite{lw}
 \bea \label{g1}
 \phi_{0,0}(u_{0,0},v_{0,0})=\psi_{0,1}(u_{0,1},v_{0,1})-\psi_{0,0}(u_{0,0},v_{0,0}), \quad \rightarrow  \psi_{0,0}(u_{0,0},v_{0,0})=\psi_{0,0}(v_{0,0}).
 \eea
Then the prolongation of the infinitesimal generator (\ref{e10}) applied to the first equation   in (\ref{e7a}) gives
\bea \label{g2}
\psi_{1,0}(v_{1,0})-\psi_{0,0}(v_{0,0})=[ \psi_{0,1}(v_{0,1}) - \psi_{0,0}(v_{0,0})]\frac{\partial g_{0,0}}{\partial u_{0,0}} + [ \psi_{1,1}(v_{1,1}) - \psi_{1,0}(v_{1,0})]\frac{\partial g_{0,0}}{\partial u_{1,0}},
\eea
where
\bea \label{g3}
v_{1,0}=v_{0,0}+g_{0,0}(u_{0,0}, u_{1,0}), \quad v_{0,1}=v_{0,0}+u_{0,0}, \quad v_{1,1}=v_{0,0}+u_{1,0}+g_{0,0}(u_{0,0}, u_{1,0}).
\eea
To comply with Theorem \ref{dt3} we look for an infinitesimal coefficient of the infinitesimal generator (\ref{e10}) of the form
\bea \label{g4}
\psi_{0,0}=w^{(1)}_{0,0} \gamma^{(1)}_{0,0}(v_{0,0}) + w^{(2)}_{0,0} \gamma^{(2)}_{0,0}(v_{0,0}),
\eea
where the functions $w^{(1)}_{n,m}$ and $w^{(2)}_{n,m}$ satisfy a linear partial difference equation on the square
\bea \label{g5}
w^{(1)}_{0,0} &=& a^{(1)}_{0,0} w^{(1)}_{0,1} + a^{(2)}_{0,0} w^{(1)}_{1,0} + a^{(3)}_{0,0} w^{(1)}_{1,1}, \\ \nonumber
w^{(2)}_{0,0} &=& b^{(1)}_{0,0} w^{(2)}_{0,1} + b^{(2)}_{0,0} w^{(2)}_{1,0} + b^{(3)}_{0,0} w^{(2)}_{1,1}.
\eea
Introducing (\ref{g4}, \ref{g5}) into (\ref{g2}) and taking into account that we can always choose $w^{(1)}_{0,1}$, $w^{(1)}_{1,0}$, $w^{(1)}_{1,1}$, $w^{(2)}_{0,1}$, $w^{(2)}_{1,0}$ and  $w^{(2)}_{1,1}$ as independent variables we get the following system of coupled equations for $\gamma^{(1)}$
\bea  \label{g6a}
&\gamma^{(1)}_{1,0}(v_{0,0} + g_{0,0}) \big ( 1 +  \frac{\partial g_{0,0}}{\partial u_{1,0}} \big ) - a^{(2)}_{0,0} \gamma^{(1)}_{0,0}(v_{0,0})  \big ( 1 +  \frac{\partial g_{0,0}}{\partial u_{0,0}} \big ) &= 0, \\ \label{g6b}
&\gamma^{(1)}_{0,1}(v_{0,0} + u_{0,0})  \frac{\partial g_{0,0}}{\partial u_{0,0}}  + a^{(1)}_{0,0} \gamma^{(1)}_{0,0}(v_{0,0})  \big ( 1 +  \frac{\partial g_{0,0}}{\partial u_{0,0}} \big ) &= 0, \\ \label{g6c}
&\gamma^{(1)}_{1,1}(v_{0,0} + u_{1,0} + g_{0,0})   \frac{\partial g_{0,0}}{\partial u_{1,0}}  + a^{(3)}_{0,0} \gamma^{(1)}_{0,0}(v_{0,0})  \big ( 1 +  \frac{\partial g_{0,0}}{\partial u_{0,0}} \big ) &= 0,
\eea
and  similar ones for the function $\gamma^{(2)}_{n,m}(v_{0,0})$.
Adding (\ref{g6a}) multiplied by $a^{(3)}_{0,0}$ to (\ref{g6c}) multiplied by $a^{(1)}_{0,0}$ we get
\bea \label{g6d}
a^{(3)}_{0,0} \gamma^{(1)}_{1,0}(v_{0,0} + g_{0,0}) \big ( 1 +  \frac{\partial g_{0,0}}{\partial u_{1,0}} \big ) + a^{(2)}_{0,0} \gamma^{(1)}_{1,1}(v_{0,0} + u_{1,0} + g_{0,0})   \frac{\partial g_{0,0}}{\partial u_{1,0}}  = 0,
\eea
an equation  similar to (\ref{g6b}), i.e  upshifting  by one the first index in (\ref{g6b}) and comparing the  result with (\ref{g6d}) we get a discrete equation for $g_{n,m}$
\bea \label{g7}
\frac{a^{(3)}_{0,0}}{a^{(1)}_{1,0}} \frac{\big ( 1 +  \frac{\partial g_{0,0}}{\partial u_{1,0}} \big )}{\big ( 1 +  \frac{\partial g_{1,0}}{\partial u_{1,0}} \big )} = a^{(2)}_{0,0} \frac{\frac{\partial g_{0,0}}{\partial u_{1,0}}}{\frac{\partial g_{1,0}}{\partial u_{1,0}}}.
\eea
In (\ref{g7})  appears the function $g_{1,0}=g_{1,0}(u_{1,0}, u_{2,0})$ and if $\frac{\partial^2 g_{1,0}}{\partial u_{1,0}\partial u_{2,0}} \ne 0$ we get a linear differential equation for $g_{0,0}$ whose solution is $g_{0,0}=g_{0,0}^{(2)}(u_{0,0})+ g_{0,0}^{(1)}(u_{0,0}) u_{1,0}$. Introducing this solution in (\ref{g7}) we get   $g_{0,0}=g_{0,0}^{(0)}+ g_{0,0}^{(1)}(u_{0,0}) u_{1,0}+g_{0,0}^{(2)} u_{0,0} $, i.e. a linear equation. By choosing, in place of (\ref{g5}) the most general linear coupled system of difference equations on the square lattice for $w^{(1)}$ and $w^{(2)}$, we would get the same result. So the introduced potential  equation (\ref{e7a}) does not provide linearizable discrete equations.

A discrete linearizable Burgers equation has been presented by Levi, Ragnisco and Bruschi considering  B\"acklund transformation of the Burgers hierarchy \cite{lrb} and by Heredero, Levi and Winternitz \cite{hlw1,hlw2}. In \cite{lrb} we can find the discrete equation 
\bea \label{g8}
u_{m+1,n} [ p + u_{m+1,n+1}] - u_{m,n} [ p + u_{m+1,n}]=0,
\eea
 and its Lax pair
 \bea \label{g9}
 \psi_{m,n+1}= u_{m,n} \psi_{m,n}, \qquad \psi_{m+1,n} = \frac{1}{p + u_{m+1,n}} \psi_{m,n}.
 \eea
 The linear P$\Delta$E corresponding to (\ref{g8}) is 
 \bea \label{g8a}
 \psi_{m+1,n+1} = \psi_{m,n} - p \psi_{m+1,n}.
 \eea
 Eqs. (\ref{g9}) suggest to rewrite (\ref{e7a}) as 
 \bea \label{g10}
 v_{n+1,m}/v_{n,m} = \mathcal E^{(1)}_{m,n}(u_{n,m}, \cdots ), \qquad  v_{n,m+1}/v_{n,m} = \mathcal E^{(2)}_{n,m}(u_{n,m}, \cdots ).
 \eea
 Eqs. (\ref{e7a}, \ref{g10}) are transformable one into the other by defining $v_{n,m}=\log(w_{n,m})$ and redefining appropriately the functions $\mathcal E^{(1)}_{n,m}$ and $\mathcal E^{(2)}_{n,m}$. However in doing so, if $w_{n,m}$ satisfies a linear equation, this will not be the case for $v_{n,m}$. So the fact that the ansatz (\ref{e7a}) do not give rise to  linearizable equations is not in contradiction with the fact that (\ref{g8}) is linearizable.
 
 The compatibility of (\ref{g10}) implies
 \bea \label{g11}
 \mathcal E^{(1)}_{n,m+1}(u_{n,m+1}, \cdots ) \mathcal E^{(2)}_{n,m}(u_{n,m}, \cdots )=\mathcal E^{(2)}_{n+1,m}(u_{n+1,m}, \cdots )\mathcal E^{(1)}_{n,m}(u_{n,m}, \cdots ).
 \eea
 If (\ref{g11}) is constrained to be an equation on the square lattice, then we must have $\mathcal E^{(1)}_{n,m}(u_{n,m}, \cdots )=\mathcal E^{(1)}_{n,m}(u_{n,m}, u_{n+1,m} )$ and $\mathcal E^{(2)}_{n,m}(u_{n,m}, \cdots )=\mathcal E^{(2)}_{n,m}(u_{n,m}, u_{n,m+1} )$. Moreover with no loss of generality we can set $\mathcal E^{(2)}_{n,m}(u_{n,m}, u_{n,m+1} )= u_{n,m}$.
 
 Let us look for the symmetries of (\ref{g10}). Applying the infinitesimal generator (\ref{e10}) to the right hand equation in (\ref{g10}) we get 
 \bea \label{g12}
 \psi_{0,0}=\psi_{0,0}(v_{0,0}), \quad 
 \phi_{0,0}=\frac{\psi_{0,1}(v_{0,1})-u_{0,0} \psi_{0,0}(v_{0,0})}{v_{0,0}}.
 \eea
 Then the determining equation associated to the left hand equation in (\ref{g10}) is given by
 \bea \label{g13}
 \psi_{1,0}(v_{1,0}) = \Big [ \frac{\partial \mathcal E^{(1)}_{0,0}}{\partial u_{0,0}} \phi_{0,0} +  \frac{\partial \mathcal E^{(1)}_{0,0}}{\partial u_{1,0}} \phi_{1,0} \Big ] v_{0,0} + \mathcal E^{(1)}_{0,0} \psi_{0,0}(v_{0,0}),
 \eea
 where the functions $\phi_{i,j}$ are expressed in term of the functions $\psi_{i,j}$ through (\ref{g12}).
 
 As we look for linearizable equations,  from Theorem \ref{dt3} it follows that we must have:
 \bea \label{g14}
 \psi_{0,0}(v_{0,0}) = \sum_{j=1}^2 w^{(j)}_{0,0} \gamma_{0,0}^{(j)}(v_{0,0}),
 \eea
 where the discrete functions $w^{(j)}_{0,0}$ satisfies a linear difference equation on the square. We can assume that 
 the coefficient of $w^{(j)}_{1,1}$ is always different from zero so that we have
 \bea \label{g15}
 w^{(1)}_{1,1} &=& a^{(1)} w^{(1)}_{0,0} + b^{(1)} w^{(1)}_{0,1} + c^{(1)} w^{(1)}_{1,0}+d^{(1)} w^{(2)}_{0,0} + e^{(1)} w^{(2)}_{0,1} + f^{(1)} w^{(2)}_{1,0}, \\ \nonumber
 w^{(2)}_{1,1} &=& a^{(2)} w^{(1)}_{0,0} + b^{(2)} w^{(1)}_{0,1} + c^{(2)} w^{(1)}_{1,0}+d^{(2)} w^{(2)}_{0,0} + e^{(2)} w^{(2)}_{0,1} + f^{(2)} w^{(2)}_{1,0}.
 \eea
 In such a case the variables $w^{(j)}_{0,0}$, $w^{(j)}_{1,0}$ and $w^{(j)}_{0,1}$, $j=1,2$, are independent and (\ref{g13}) splits in three couples of equations relating the functions $\gamma^{(j)}_{0,0}(v_{0,0})$,  $j=1,2$, with the function $\mathcal E^{(1)}_{0,0}$
 \bea \label{g16}
 && \gamma^{(1)}_{1,0}  \mathcal E^{(1)}_{0,0} =  \frac{\partial \mathcal E^{(1)}_{0,0}}{\partial u_{1,0}} \Big [ b^{(1)} \gamma^{(1)}_{1,1} + b^{(2)} \gamma^{(2)}_{1,1} - u_{1,0} \gamma^{(1)}_{1,0} \Big ],\\ \label{g16a}
 && \gamma^{(2)}_{1,0}  \mathcal E^{(1)}_{0,0} =  \frac{\partial \mathcal E^{(1)}_{0,0}}{\partial u_{1,0}} \Big [ e^{(1)} \gamma^{(1)}_{1,1} + e^{(2)} \gamma^{(2)}_{1,1} - u_{1,0} \gamma^{(2)}_{1,0} \Big ],\\ \label{g17}
 && \mathcal E^{(1)}_{0,0} \frac{\partial \mathcal E^{(1)}_{0,0}}{\partial u_{0,0}}  \gamma^{(1)}_{0,0} u_{0,0} =  \frac{\partial \mathcal E^{(1)}_{0,0}}{\partial u_{1,0}} \Big [ a^{(1)} \gamma^{(1)}_{1,1} + a^{(2)} \gamma^{(2)}_{1,1}  \Big ] + \Big [  \mathcal E^{(1)}_{0,0} \Big ]^2 \gamma^{(1)}_{0,0} , \\ \label{g17a}
  && \mathcal E^{(1)}_{0,0} \frac{\partial \mathcal E^{(1)}_{0,0}}{\partial u_{0,0}}  \gamma^{(2)}_{0,0} u_{0,0} =  \frac{\partial \mathcal E^{(1)}_{0,0}}{\partial u_{1,0}} \Big [ d^{(1)} \gamma^{(1)}_{1,1} + d^{(2)} \gamma^{(2)}_{1,1}  \Big ] + \Big [  \mathcal E^{(1)}_{0,0} \Big ]^2 \gamma^{(2)}_{0,0} , \\ \label{g18}
   && \mathcal E^{(1)}_{0,0} \frac{\partial \mathcal E^{(1)}_{0,0}}{\partial u_{0,0}}  \gamma^{(1)}_{0,1}  +  \frac{\partial \mathcal E^{(1)}_{0,0}}{\partial u_{1,0}} \Big [ c^{(1)} \gamma^{(1)}_{1,1} + c^{(2)} \gamma^{(2)}_{1,1}  \Big ]  =0, \\ \label{g18a}
  && \mathcal E^{(1)}_{0,0} \frac{\partial \mathcal E^{(1)}_{0,0}}{\partial u_{0,0}}  \gamma^{(2)}_{0,1}  +  \frac{\partial \mathcal E^{(1)}_{0,0}}{\partial u_{1,0}} \Big [ f^{(1)} \gamma^{(1)}_{1,1} + f^{(2)} \gamma^{(2)}_{1,1}  \Big ]  =0,
 \eea
 where $v_{0,1}=u_{0,0} v_{0,0}$, $v_{1,0}=\mathcal E^{(1)}_{0,0} v_{0,0}$ and $v_{1,1}=u_{1,0} \mathcal E^{(1)}_{0,0} v_{0,0}$ due to  (\ref{g10}).

 As $\mathcal E^{(1)}_{0,0}$ is a function of $u_{0,0}$ and $\gamma^{(j)}_{0,0}$ is a function of $v_{0,0}$ we get from (\ref{g16}, \ref{g16a})
 \bea \label{g19}
 \frac{ \mathcal E^{(1)}_{0,0}+u_{1,0} \frac{\partial \mathcal E^{(1)}_{0,0}}{\partial u_{1,0}}}{ \frac{\partial \mathcal E^{(1)}_{0,0}}{\partial u_{1,0}}}= \kappa_0 = \frac{b^{(1)} \gamma^{(1)}_{1,1} + b^{(2)} \gamma^{(2)}_{1,1} - u_{1,0} \gamma^{(1)}_{1,0}}{\gamma^{(1)}_{1,0} }= \frac{e^{(1)} \gamma^{(1)}_{1,1} + e^{(2)} \gamma^{(2)}_{1,1} - u_{1,0} \gamma^{(2)}_{1,0} }{ \gamma^{(2)}_{1,0} }.
 \eea
 From (\ref{g17}, \ref{g17a}) we get 
 \bea \label{g20}
 \frac{ \mathcal E^{(1)}_{0,0}-u_{0,0} \mathcal E^{(1)}_{0,0} \frac{\partial \mathcal E^{(1)}_{0,0}}{\partial u_{0,0}}}{ \frac{\partial \mathcal E^{(1)}_{0,0}}{\partial u_{1,0}}}= \kappa_1 =- \frac{a^{(1)} \gamma^{(1)}_{1,1} + a^{(2)} \gamma^{(2)}_{1,1} - u_{1,0} \gamma^{(1)}_{1,0}}{\gamma^{(1)}_{1,0} }= \frac{d^{(1)} \gamma^{(1)}_{1,1} + d^{(2)} \gamma^{(2)}_{1,1} - u_{1,0} \gamma^{(2)}_{1,0} }{ \gamma^{(2)}_{1,0} },
 \eea
while from (\ref{g18}, \ref{g18a}) we get 
 \bea \label{g21}
 \frac{  \mathcal E^{(1)}_{0,0} \frac{\partial \mathcal E^{(1)}_{0,0}}{\partial u_{0,0}}}{ \frac{\partial \mathcal E^{(1)}_{0,0}}{\partial u_{1,0}}}= \kappa_2 =- \frac{c^{(1)} \gamma^{(1)}_{1,1} + c^{(2)} \gamma^{(2)}_{1,1} - u_{1,0} \gamma^{(1)}_{1,0}}{\gamma^{(1)}_{1,0} }= \frac{f^{(1)} \gamma^{(1)}_{1,1} + f^{(2)} \gamma^{(2)}_{1,1} - u_{1,0} \gamma^{(2)}_{1,0} }{ \gamma^{(2)}_{1,0} }.
 \eea
When $\frac{\partial \mathcal E^{(1)}_{0,0}}{\partial u_{1,0}}\ne 0$, solving the equations for $\mathcal E^{(1)}_{0,0}$ from (\ref{g19}, \ref{g20}, \ref{g21}) we get 
\bea \label{g22}
\mathcal E^{(1)}_{0,0} = \frac{\kappa_2 u_{0,0} + \kappa_1}{\kappa_0 - u_{10}}.
\eea
With no loss of generality we can set $\gamma^{(2)}=0$ and then $e^{(j)}=d^{(j)}=f^{(j)}=0$, $j=1,2$, $w^{(2)}=0$ and the equations (\ref{g19}, \ref{g20}, \ref{g21}) are compatible if $\kappa_2 \kappa_0 a^{(1)} = \kappa_1 b^{(1)} c^{(1)}$. The resulting class of linearizable P$\Delta$Es (\ref{g22}) is an extension of the Burgers equation (\ref{g8})
\bea \label{g23}
(\kappa_0-u_{1,0})(\kappa_2 u_{0,1} +\kappa_1) u_{0,0} - (\kappa_0-u_{1,1})(\kappa_2 u_{0,0} +\kappa_1) u_{1,0}=0,
\eea
which reduces to it when $\kappa_{0}\not=0$, $\kappa_1=1$ and $\kappa_2=0$. In (\ref{g23}) in all generality $\kappa_1$ can be taken to be either $0$ or $1$. Other two Burgers equations are obtained taking $\kappa_0\not=0$, $\kappa_1=0$ and $\kappa_2\not=0$ or $\kappa_0=0$, $\kappa_1=1$ and $\kappa_2\not=0$. All the these three Burgers equations can be transformed to $\left(1+u_{0,0}\right)u_{1,0}=\left(1+u_{0,1}\right)u_{0,0}$ and we recover the results obtained in \cite{ls122}. Moreover, if $\kappa_{2}\not=0$, $\kappa_1=1$  and $\kappa_{0}\not=0$, by the transformation $u_{0,0}=\kappa_{0} \frac{e_{1}-o_{2}}{e_{1}-e_{2}} \frac{\tilde u_{0,0}+e_{2}}{\tilde u_{0,0}+o_{2}}\;$, where $(e_j, o_j)$, $j=1,2$ are arbitrary parameters,  $\tilde u_{0,0}$ will satisfy the Hietarinta equation \cite{h} 
\begin{gather}\label{Rosa}
\frac{u_{0,0}+e_{2}} {u_{0,0}+e_{1}} \frac{u_{1,1}+o_{2}} {u_{1,1}+o_{1}}=\frac{u_{1,0}+e_{2}} {u_{1,0}+o_{1}} \frac{u_{0,1}+o_{2}} {u_{0,1}+e_{1}},
\end{gather}
with  
$\frac{(o_{1}-e_{2})(e_{1}-o_{2})}{(e_{1}-e_{2})(o_{1}-o_{2})}=-\kappa_{2} \kappa_{0}$.

When $ \frac{\partial \mathcal E^{(1)}_{0,0}}{\partial u_{1,0}}= 0$, we must have $\mathcal E^{(1)}_{0,0} \gamma^{(j)}_{1,0}=0$ which has no nontrivial solution. 
\section{Conclusions}   \label{five}
In this article we studied from the point of view of the symmetries the linearization through invertible or non--invertible transformations of nonlinear partial difference equations defined on a fixed non--transformable lattice. We find, in strict analogy to the continuous case, the linearizability conditions and apply them to the classification of linearizable P$\Delta$Es on a lattice of four points in the plane.  The results we obtain are compatible with the results obtained previously by requiring the existence of a linearizing transformation, however the classification turns out to be easier.

Work is in progress to extend these results to the case of linearizable equations belonging to more general plane lattices and to nonlinear partial difference equations defined on transformable lattices.

\section*{Acknowledgments}
We thank P. Winternitz for many enlightening discussions and the CRM, Universit\`e de Montr\`eal, for the hospitality of LD at the time when the present work was finished.

LD and SC have been partly supported by the Italian Ministry of Education and Research, 
 PRIN ``Continuous and discrete nonlinear integrable evolutions: from water
waves to symplectic maps" from 2010. 

\section*{Appendix}
In this Appendix we state and proof  a theorem on the necessary and sufficient conditions for a partial difference equation to be linear. For the sake of simplicity of the presentation  we limit our considerations to the case when the equation  is defined on 4 points (see Fig.1), i.e.
\bea \label{ap2}
\mathcal E \Big (n, m, u_{n,m}, u_{n+1,m},  u_{n,m+1}, u_{n+1,m+1} \Big )=0.
\eea
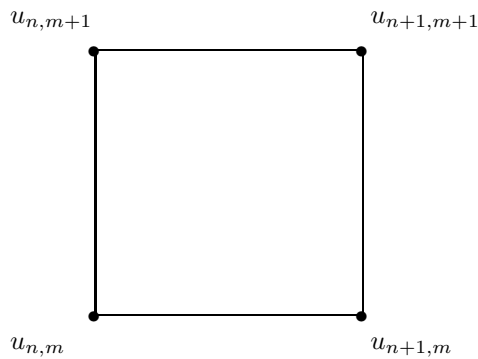
\begin{figure}[htbp] \label{fig1}
\begin{center}
\setlength{\unitlength}{0.1em}
\begin{picture}(200,140)(-50,-20)
 \put( 0,  0){\line(1,0){100}}
  \put( 100,  0){\line(-1,0){100}}
\put( 0,100){\line(1,0){100}}
  \put( 100,100){\line(-1,0){100}}
\put(  0, 0){\line(0,1){100}}
  \put(  0, 100){\line(0,-1){100}}
\put(100, 0){\line(0,1){100}}
  \put(100, 100){\line(0,-1){100}}
   \put(97, -3){$\bullet$}
   \put(-3, -3){$\bullet$}
   \put(-3, 97){$\bullet$}
   \put(97, 97){$\bullet$}
  \put(-32,-13){$u_{n,m}$}
  \put(103,-13){$u_{n+1,m}$}
  \put(103,110){$u_{n+1,m+1}$}
  \put(-32,110){$u_{n,m+1}$}
\end{picture}
\caption{Four points on a square}
\end{center}
\end{figure}

\begin{theorem} \label{at1}
Necessary and sufficient conditions for a discrete equation (\ref{ap2}) to have a symmetry of infinitesimal generator $\hat X= \phi_{n,m} \partial_{u_{n,m}}$, is that it is linear. The function $\phi_{n,m}$ must satisfy the following linear homogeneous discrete equation  
\bea \label{ap1}
\mathcal F \Big (\phi_{n,m}, \phi_{n+1,m},   \phi_{n,m+1}, \phi_{n+1,m+1} \Big )=\phi_{n+1,m+1} -a_{n,m} \phi_{n,m} - b_{n,m} \phi_{n,m+1} - c_{n,m} \phi_{n+1,m}=0.
\eea
\end{theorem}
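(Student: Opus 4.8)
\emph{Proof plan.} The statement packages three claims: linearity of (\ref{ap2}) forces the existence of the symmetry; the existence of the symmetry forces linearity of (\ref{ap2}); and, when the symmetry is present, $\phi_{n,m}$ must satisfy (\ref{ap1}). Throughout I treat $\phi_{n,m}$ as a function of the discrete variables $n,m$ alone (as in (\ref{e4})) and read ``having the symmetry'' in the strong sense indicated in the Introduction, namely that the characteristics $\phi_{n,m}$ fill an infinite-dimensional space parametrized by an arbitrary function.

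I would first dispose of the easy direction. If (\ref{ap2}) is linear, solving it for the corner $u_{n+1,m+1}$ (up to a relabelling of the indices) as in the passage from (\ref{f1}) to (\ref{f2}) and normalizing that coefficient to one puts it in the form $u_{n+1,m+1}=F^{(0)}_{n,m}+a_{n,m}u_{n,m}+c_{n,m}u_{n+1,m}+b_{n,m}u_{n,m+1}$. Prolonging $\hat X=\phi_{n,m}\partial_{u_{n,m}}$ to the four points of the square and imposing invariance on the solution set yields $\phi_{n+1,m+1}=a_{n,m}\phi_{n,m}+c_{n,m}\phi_{n+1,m}+b_{n,m}\phi_{n,m+1}$, which is exactly (\ref{ap1}); its solution space is infinite dimensional, so each of its solutions $\phi_{n,m}$ supplies an admissible symmetry.

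For the converse I would solve (\ref{ap2}) for a corner it genuinely contains --- relabelling, say $u_{n+1,m+1}=F(n,m,u_{n,m},u_{n+1,m},u_{n,m+1})$ --- and write the prolonged invariance condition as the determining relation
\[
\phi_{n+1,m+1}=\phi_{n,m}F_{,u_{n,m}}+\phi_{n+1,m}F_{,u_{n+1,m}}+\phi_{n,m+1}F_{,u_{n,m+1}}.
\]
Here $u_{n,m},u_{n+1,m},u_{n,m+1}$ are independent variables, whereas the left-hand side, a value of the grid function $\phi$, does not involve them, so the right-hand side must be constant in each of the three. Setting to zero its $u_{n,m}$-, $u_{n+1,m}$- and $u_{n,m+1}$-derivatives and invoking the independence of $\phi_{n,m}$, $\phi_{n+1,m}$, $\phi_{n,m+1}$ then forces every second derivative of $F$ in $u_{n,m},u_{n+1,m},u_{n,m+1}$ to vanish. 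Hence $F$ is affine in those three variables, (\ref{ap2}) is linear, and the determining relation collapses to (\ref{ap1}) with $a_{n,m},b_{n,m},c_{n,m}$ the normalized coefficients of the linearized equation, which settles the remaining two claims at once.

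The prolongation computation and the enumeration of the second derivatives are routine; the delicate point --- the main obstacle --- is to justify rigorously that an infinite-dimensional symmetry algebra ``depending on an arbitrary function'' really does allow $\phi$ to be prescribed independently at three corners of the elementary square, since it is precisely this freedom that lets the single functional relation above be separated into the individual conditions $F_{,u_iu_j}=0$. Finally, I would record that nothing is lost in the choice of corner: if (\ref{ap2}) happens to be independent of $u_{n+1,m+1}$, the same argument is run after solving for whichever corner does occur, the four cases differing only by a relabelling of $n$ and $m$.
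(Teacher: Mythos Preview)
Your argument is correct but follows a genuinely different route from the paper's. You solve (\ref{ap2}) for a corner, $u_{n+1,m+1}=F(u_{n,m},u_{n+1,m},u_{n,m+1})$, write the determining equation $\phi_{n+1,m+1}=\sum_i\phi_iF_{,u_i}$, note that the left side is $u$-independent, and then let the three free $\phi$-values vary to force all second partials of $F$ to vanish. The paper never solves (\ref{ap2}) for a corner. Instead it allows the $\phi$-equation itself to be an a priori \emph{nonlinear} relation $\phi_{n+1,m+1}=F(n,m,\phi_{n,m},\phi_{n+1,m},\phi_{n,m+1})$ (so in the paper $F$ lives on the $\phi$-side, not the $u$-side), expresses the general $\hat X$-invariant equation through the joint invariants $K_j=u_j/\phi_j-u_{n,m}/\phi_{n,m}$, and then imposes that the resulting $\mathcal E$ be independent of the $\phi$'s. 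That requirement first yields a homogeneity PDE forcing $F=\phi_{n,m}\,f(\phi_{n+1,m}/\phi_{n,m},\phi_{n,m+1}/\phi_{n,m})$, and a second reduction forces $f$ affine; linearity of (\ref{ap2}) and of (\ref{ap1}) then fall out simultaneously. Your route is shorter and more transparent; the paper's invariant-theoretic computation has the mild advantage that it never needs to isolate a distinguished corner of $\mathcal E$ and that the linearity of the $\phi$-equation is derived in the same pass rather than read off afterward. Both hinge on exactly the point you flag as delicate---the freedom to prescribe $\phi_{n,m},\phi_{n+1,m},\phi_{n,m+1}$ independently---which the paper encodes by writing the $\phi$-relation as solved for $\phi_{n+1,m+1}$.
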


\begin{proof}
It is almost immediate to prove that a linear partial difference equation defined on four lattice points \eqref{ap2} has a symmetry \eqref{e4} where $\phi_{n,m}$ satisfies a homogeneous linear equation. To obtain this result it is just sufficient to solve the invariance condition
\bea \label{ap2a}
\mbox{pr}\hat X \mathcal E \Bigl |_{\mathcal E=0}=0.
\eea

Not so easy is the proof that an equation which has such a symmetry \eqref{e4} must be linear. In the full generality (\ref{ap1}) can be rewritten  as
\bea \label{ap3}
\phi_{n+1,m+1}= F \Big (n, m, \phi_{n,m}, \phi_{n+1,m},  \phi_{n,m+1} \Big ),
\eea
and, by assumption, (\ref{ap2}) does not depend on $\phi_{n,m}$ and (\ref{ap3}) on $u_{n,m}$.

Let us prolong the symmetry generator (\ref{e4}) to all points contained in (\ref{ap2})
\bea \label{ap4}
\mbox{pr} \hat X = \phi_{n,m} \partial_{u_{n,m}} + \phi_{n+1,m} \partial_{u_{n+1,m}} + \phi_{n,m+1} \partial_{u_{n,m+1}} + \phi_{n+1,m+1} \partial_{u_{n+1,m+1}}.
\eea
The most generic equation (\ref{ap2}) having the symmetry (\ref{ap4}) will be written in terms of its invariants
\bea \label{ap5}
\mathcal E \Big ( n,m, K_1, K_2, K_3 \Big ) = 0,
\eea
with
\bea \label{ap6}
K_1 = \frac{u_{n,m+1}}{\phi_{n,m+1}}-\frac{u_{n,m}}{\phi_{n,m}}, \quad K_2 = \frac{u_{n+1,m}}{\phi_{n+1,m}}-\frac{u_{n,m}}{\phi_{n,m}}, \quad K_3 = \frac{u_{n+1,m+1}}{\phi_{n+1,m+1}}-\frac{u_{n,m}}{\phi_{n,m}}.
\eea
As (\ref{ap5}) depends   on $n,m$ we can with no loss of generality replace the invariants (\ref{ap6}) in (\ref{ap5}) by the functions 
\bea \label{ap7}
&&\tilde K_1 = u_{n,m+1}-u_{n,m}\frac{ \phi_{n,m+1}}{\phi_{n,m}}, \quad \tilde K_2 = u_{n+1,m}-u_{n,m} \frac{\phi_{n+1,m}}{\phi_{n,m}}, \\ \nonumber && \tilde K_3 = u_{n+1,m+1}-u_{n,m}\frac{ F \Big (n, m, \phi_{n,m}, \phi_{n+1,m},  \phi_{n,m+1} \Big )}{\phi_{n,m}}.
\eea 
Invariance of  (\ref{ap5}) then requires $\frac{\partial \mathcal E}{\partial \phi_{n,m}}=\frac{\partial \mathcal E}{\partial \phi_{n+1,m}}=\frac{\partial \mathcal E}{\partial \phi_{n,m+1}}=0$, i.e.
\bea \label{ap8a}
&&\frac{\partial \mathcal E}{\partial \tilde K_1} \Big ( u_{n,m} \frac{\phi_{n,m+1}}{\phi_{n,m}^2} \Big ) + \frac{\partial \mathcal E}{\partial \tilde K_2} \Big ( u_{n,m} \frac{\phi_{n+1,m}}{\phi_{n,m}^2} \Big )  + \frac{\partial \mathcal E}{\partial \tilde K_3} \Big ( \frac{F}{\phi_{n,m}^2} - \frac{ F_{,\phi_{n,m}}}{\phi_{n,m}} \Big )  u_{n,m}=0,
\\ \label{ap8b}
&&\frac{\partial \mathcal E}{\partial \tilde K_1} \Big ( -  \frac{u_{n,m}}{\phi_{n,m}} \Big ) + \frac{\partial \mathcal E}{\partial \tilde K_3} \Big (  - \frac{ F_{,\phi_{n,m+1}}}{\phi_{n,m}} \Big )  u_{n,m}=0,
\\ \label{ap8c}
&&\frac{\partial \mathcal E}{\partial \tilde K_2} \Big ( - \frac{u_{n,m}}{\phi_{n,m}} \Big ) + \frac{\partial \mathcal E}{\partial \tilde K_3} \Big ( - \frac{ F_{,\phi_{n+1,m}}}{\phi_{n,m}} \Big )  u_{n,m}=0.
\eea
As $\frac{\partial \mathcal E}{\partial \tilde K_j} \ne 0$ and (\ref{ap8a}, \ref{ap8b}, \ref{ap8c}) are a homogeneous system of algebraic equations, the determinant of the coefficients must be zero.  Consequently the function $ F$ must satisfy the first order linear partial differential equation
\bea \label{ap9}
 F -\phi_{n,m}  F_{,\phi_{n,m}} -\phi_{n,m+1}  F_{,\phi_{n,m+1}} -\phi_{n+1,m}  F_{,\phi_{n+1,m}}=0
\eea
i.e. $ F$ is given by
\bea \label{ap10}
 F = \phi_{n,m} f(\xi,\tau), \qquad \xi = \frac{\phi_{n+1,m}}{\phi_{n,m}}, \quad \tau = \frac{\phi_{n,m+1}}{\phi_{n,m}},
\eea where $f(\xi,\tau)$ is an arbitrary function of its arguments.

For $\ F$ given by (\ref{ap10}) the system (\ref{ap8a}, \ref{ap8b}, \ref{ap8c}) reduces to the following two equations for $\frac{\partial \mathcal E}{\partial \tilde K_j}$ 
\bea \label{ap11}
\frac{\partial \mathcal E}{\partial \tilde K_1}  + f_{\xi} \frac{\partial \mathcal E}{\partial \tilde K_3}=0,
\quad \frac{\partial \mathcal E}{\partial \tilde K_2}  + f_{\tau} \frac{\partial \mathcal E}{\partial \tilde K_3} =0,
\eea
whose solution is obtained by solving (\ref{ap11}) on the characteristics 
\bea \label{ap12}
\mathcal E &=& \mathcal E(n, m, L), \\ \nonumber L&=& u_{n+1,m+1} - f u_{n,m} - f_{,\xi} \Big (u_{n,m+1}-u_{n,m}\frac{ \phi_{n,m+1}}{\phi_{n,m}} \Big ) - f_{,\tau} \Big (  u_{n+1,m}-u_{n,m} \frac{\phi_{n+1,m}}{\phi_{n,m}} \Big ).
\eea
Requiring that $\mathcal E$ be independent of $\phi_{n,m}, \phi_{n+1,m}$ and $\phi_{n,m+1}$ we get $f_{,\xi \xi}=f_{,\tau \tau}=f_{,\xi \tau} =0$ i.e.
\bea \label{ap13a}
f &=& a_{n,m} + b_{n,m} \xi + c_{n,m} \tau, \quad  F = a_{n,m} \phi_{n,m} + b_{n,m} \phi_{n,m+1} + c_{n,m} \phi_{n+1,m}, \\ \label{ap13b}
L &=& u_{n+1,m+1} - a_{n,m} u_{n,m} - b_{n,m} u_{n,m+1} - c_{n,m} u_{n+1,m} .
\eea
$\mathcal E=0$ is an (non autonomous, maybe transcendental) equation for $L$ which, when solved, gives $L=d_{n,m}$, where $d_{n,m}$ stands for the set of the zeros of the equation (in addition to $n$ and $m$ possibly dependent on a set of parameters). In conclusion $u_{n,m}$ must satisfy  the linear equation   
\bea \label{ap14}
u_{n+1,m+1} - a_{n,m} u_{n,m} - b_{n,m} u_{n,m+1} - c_{n,m} u_{n+1,m} -d_{n,m}=0
\eea
\end{proof}
\begin{figure} \label{fig2}
\setlength{\unitlength}{1mm}
\begin{center}
\begin{picture}(100,51)
 \linethickness{1.pt}
     \put(80,10) {\circle*{1,5}}
  \put(50,10) {\circle*{1,5}}
   \put(50,40) {\circle*{1,5}}
 \put(20,10) {\circle*{1,5}}
\put(49,07){$u_{n,m}$}
\put(79,07){$u_{n+1,m}$}
\put(19,07){$u_{n-1,m}$}
\put(51,39){\boldmath $u_{n,m+1}$}
\put(40,50){\line(1,-1){55}}
\put(10,10) {\line(1,0){85}}
\put(50,10) {\line(0,1){40}}
\put(60,50){\line(-1,-1){55}}

\end{picture}
\end{center}
\caption[]{Four points on a triangle. }
\end{figure}
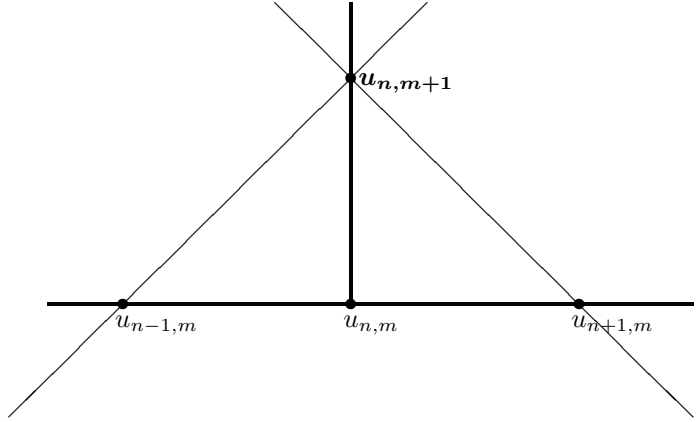
\begin{rem}
The proof of Theorem \ref{at1} does not depends on the position of the four lattice points considered in (\ref{ap2}). The same result is also valid if the four points are put on the triangle shown in Fig. 2, i.e.
\bea \label{ap2aa}
\mathcal E \Big (n, m, u_{n-1,m}, u_{n,m}, u_{n+1,m},  u_{n,m+1})=0.
\eea
\end{rem}



\end{document}